\newcommand\samethanks[1][\value{footnote}]{\footnotemark[#1]}
\def\myfnt{\ifx\protect\@typeset@protect\expandafter\footnote\else\expandafter\@gobble\fi}
\def\BState{\State\hskip-\ALG@thistlm}
\newcommand{\commentout}[1]{}
\newcommand{\junk}[1]{}
\newcommand{\etalabbr}{\emph{et al.}}
\newcommand{\cE}{\mathcal{E}}
\newcommand{\cH}{\mathcal{H}}
\newcommand{\cG}{\mathcal{G}}
\newtheorem{remark}{Remark}
\newtheorem{theorem}{Theorem}
\newtheorem{lemma}{Lemma}
\newtheorem{proof}{Proof}
\newcommand{\abs}[1]{\left|#1\right|}
\newcommand{\E}[1]{\mathbb{E} \left[#1\right]}
\newcommand{\I}[1]{\mathds{1} \! \left\{#1\right\}}
\newcommand{\rnd}[1]{#1}
\newcommand{\set}[1]{\left\{#1\right\}}
\newcommand{\transpose}{^\mathsf{\scriptscriptstyle T}}
\DeclareMathOperator*{\argmax}{arg\,max\,}
\mathchardef\mhyphen="2D
\newcommand{\cascadeklucb}{{\tt CascadeKL\mhyphen UCB}}
\newcommand{\cascadelinucb}{{\tt CascadeLinUCB}}
\newcommand{\cascadelsb}{{\tt CascadeLSB}}
\newcommand{\lsbgreedy}{{\tt LSBGreedy}}
\begin{document}

\title{Online Diverse Learning to Rank from Partial-Click Feedback}
\author{Prakhar Gupta\thanks{Equal Contribution} \,\thanks{Carnegie Mellon University}\\
prakharg@cs.cmu.edu
\and
Gaurush Hiranandani\samethanks[1] \,\thanks{University of Illinois Urbana-Champaign}\\
gaurush2@illinois.edu
\and
Harvineet Singh\samethanks[1] \,\thanks{New York University} \\
hs3673@nyu.edu
\and
Branislav Kveton\thanks{Google Research} \\
bkveton@google.com
\and
Zheng Wen\thanks{Adobe Research}\\
zwen@adobe.com
\and
Iftikhar Ahamath Burhanuddin\samethanks[6] \\
burhanud@adobe.com 
}
\date{\today}

\flushbottom
\maketitle

\begin{abstract}
Learning to rank is an important problem in machine learning and recommender systems. In a recommender system, a user is typically recommended a list of items. Since the user is unlikely to examine the entire recommended list, partial feedback arises naturally. At the same time, diverse recommendations are important because it is challenging to model all tastes of the user in practice. In this paper, we propose the first algorithm for online learning to rank diverse items from partial-click feedback. We assume that the user examines the list of recommended items until the user is attracted by an item, which is clicked, and does not examine the rest of the items. This model of user behavior is known as the cascade model. We propose an online learning algorithm, $\cascadelsb$, for solving our problem. The algorithm actively explores the tastes of the user with the objective of learning to recommend the optimal diverse list. We analyze the algorithm and prove a gap-free upper bound on its $n$-step regret. We evaluate $\cascadelsb$ on both synthetic and real-world datasets, compare it to various baselines, and show that it learns even when our modeling assumptions do not hold exactly. 
\end{abstract}

%

\section{Introduction}
\label{sec:introduction}

Learning to rank is an important problem with many practical applications in web search \cite{agichtein06improving}, information retrieval \cite{liu2009learning}, and recommender systems \cite{ricci11introduction}. Recommender systems typically recommend a list of items, which allows the recommender to better cater to the various tastes of the user by the means of diversity \cite{carbonell1998use,mei2010divrank,adomavicius2012improving}. In practice, the user rarely examines the whole recommended list, because the list is too long or the user is satisfied by some higher-ranked item. These aspects make the problem of learning to rank from user feedback challenging.

User feedback, such as clicks, have been found to be an informative source for training learning to rank models \cite{agichtein06improving,yu2015learning}. This motivates recent research on online learning to rank \cite{radlinski08learning,slivkins13ranked}, where the goal is to interact with the user to collect clicks, with the objective of learning good items to recommend over time. Online learning to rank methods have shown promising results when compared to state-of-the-art offline methods \cite{grotov2016online}. This is not completely unexpected. The reason is that offline methods are inherently limited by past data, which are biased due to being collected by production policies. Online methods overcome this bias by sequential experimentation with users.

The concept of diversity was introduced in online learning to rank in ranked bandits \cite{radlinski08learning}. Yue and Guestrin \cite{guestrin2011nips} formulated this problem as learning to maximize a submodular function and proposed a model that treats clicks as cardinal utilities of items. Another line of work is by Raman et al.~\cite{raman2012online}, who instead of treating clicks as item-specific cardinal utilities, only rely on preference feedback between two rankings. However, these past works on diversity do not address a common bias in learning to rank, which is that lower-ranked items are less likely to be clicked due to the so-called \emph{position bias}. One way of explaining this bias is by the cascade model of user behavior \cite{craswell08experimental}. In this model, the user examines the recommended list from the first item to the last, clicks on the first attractive item, and then leaves without examining the remaining items. This results in  \emph{partial-click feedback}, because we do not know whether the items after the first attractive item would be clicked. Therefore, clicks on lower-ranked items are biased due to higher-ranked items. Although the cascade model may seem limited, it has been found to be extremely effective in explaining user behavior \cite{chuklin2015click}. Several recent papers proposed online learning to rank algorithms in the cascade model \cite{kveton15cascading,kveton15combinatorial,zong16cascading,li2016contextual}. None of these papers recommend a diverse list of items.

In this paper, we propose a novel approach to online learning to rank from clicks that addresses both the challenges of partial-click feedback due to position bias and diversity in the recommendations. In particular, we make the following contributions:
\begin{itemize}
  \item We propose a \emph{diverse cascade model}, a novel click model that models both partial-click feedback and diversity.
  \item We propose a \emph{diverse cascading bandit}, an online learning framework for learning to rank in the diverse cascade model.
  \item We propose $\cascadelsb$, a computationally-efficient algorithm for learning to rank in the diverse cascading bandit. We analyze this algorithm and derive a $O(\sqrt{n})$ upper bound on its $n$-step regret.
  \item We comprehensively evaluate $\cascadelsb$ on one synthetic problem and three real-world problems, and show that it consistently outperforms baselines, even when our modeling assumptions do not hold exactly.
\end{itemize}

The paper is organized as follows. In \cref{sec:background}, we present the necessary background to understand our work. We propose our new diverse cascade model in \cref{sec:model}. In \cref{sec:bandit}, we formulate the problem of online learning to rank in our click model. Our algorithm, $\cascadelsb$, is proposed in \cref{sec:algorithm} and analyzed in \cref{sec:analysis}. In \cref{sec:experiments}, we empirically evaluate $\cascadelsb$ on several problems. We review related work in \cref{sec:related work}. Conclusions and future work are discussed in \cref{sec:conclusions}.

We define $[n] = \set{1, \dots, n}$. For any sets $A$ and $B$, we denote by $A^B$ the set of all vectors whose entries are indexed by $B$ and take values from $A$. We treat all vectors as column vectors. 

%

\section{Background}
\label{sec:background}

This section reviews two \emph{click models} \cite{chuklin2015click}. A click model is a stochastic model that describes how the user interacts with a list of items. More formally, let $E = [L]$ be a \emph{ground set} of $L$ items, such as the set of all web pages or movies. Let $A = (a_1, \dots, a_K) \in \Pi_K(E)$ be a \emph{list of $K\leq L$ recommended items}, where $a_k$ is the $k$-th recommended item and $\Pi_K(E)$ is the set of all \emph{$K$-permutations} of the set $E$. Then the click model describes how the user \emph{examines} and \emph{clicks} on items in any list $A$.

\subsection{Cascade Model}
\label{sec:cascade model}

The \emph{cascade model} \cite{craswell08experimental} explains a common bias in recommending multiple items, which is that lower ranked items are less likely to be clicked than higher ranked items. The model is parameterized by $L$ \emph{item-dependent attraction probabilities} $\bar{w} \in [0, 1]^E$.

The user examines a recommended list $A \in \Pi_K(E)$ from the first item $a_1$ to the last $a_K$. When the user examines item $a_k$, the item attracts the user with probability $\bar{w}(a_k)$, independently of the other items. If the user is attracted by an item $a_k$, the user clicks on it and does not examine any of the remaining items. If the user is not attracted by item $a_k$, the user examines the next item $a_{k + 1}$. The first item is examined with probability one.

Since each item attracts the user independently, the probability that item $a_k$ is examined is $\prod_{i = 1}^{k - 1} (1 - \bar{w}(a_i))$, and the probability that at least one item in $A$ is attractive is
\begin{align}
  \textstyle
  1 - \prod_{k = 1}^K (1 - \bar{w}(a_k))\,.
  \label{eq:cascade click probability}
\end{align}
Clearly, this objective is maximized by $K$ most attractive items.

\subsection{Diverse Click Model - Yue and Guestrin \cite{guestrin2011nips}}
\label{sec:yue model}

Submodularity is well established in diverse recommendations \cite{carbonell1998use}. In the model of Yue and Guestrin \cite{guestrin2011nips}, the probability of clicking on an item depends on the gains in topic coverage by that item and the interests of the user in the covered topics.

Before we discuss this model, we introduce basic terminology. The \emph{topic coverage} by items $S \subseteq E$, $c(S) \in [0, 1]^d$, is a $d$-dimensional vector whose $j$-th entry is the coverage of topic $j \in [d]$ by items $S$. In particular, $c(S) = (c_1(S), \dots, c_d(S))$, where $c_j(S)$ is a \emph{monotone} and \emph{submodular} function in $S$ for all $j \in [d]$, that is
\begin{align*}
  \forall A \subseteq E, e \in E: & \ c_j(A \cup \set{e}) \geq c_j(A)\,, \\
  \forall A \subseteq B \subseteq E, e \in E: &
  \ c_j(A \cup \set{e}) - c_j(A) \geq c_j(B \cup \set{e}) - c_j(B)\,.
\end{align*}
For topic $j$, $c_j(S) = 0$ means that topic $j$ is not covered at all by items $S$, and $c_j(S) = 1$ means that topic $j$ is completely covered by items $S$. For any two sets $S$ and $S'$, if $c_j(S) > c_j(S')$, items $S$ cover topic $j$ better than items $S'$.

The \emph{gain in topic coverage} by item $e$ over items $S$ is defined as
\begin{align}
  \Delta(e \mid S) = c(S \cup \set{e}) - c(S)\,.
  \label{eq:topic coverage gain}
\end{align}
Since $c_j(S) \in [0, 1]$ and $c_j(S)$ is monotone in $S$, $\Delta(e \mid S) \in [0, 1]^d$. The \emph{preferences} of the user are a distribution over $d$ topics, which is represented by a vector $\theta^\ast = (\theta^\ast_1, \dots, \theta^\ast_d)$.

The user examines all items in the recommended list $A$ and is attracted by item $a_k$ with probability
\begin{align}
  \langle\Delta(a_k \mid \set{a_1, \dots, a_{k - 1}}), \theta^\ast\rangle\,,
  \label{eq:attraction probability}
\end{align}
where $\langle\cdot,\cdot\rangle$ is the dot product of two vectors. The quantity in \eqref{eq:attraction probability} is the gain in topic coverage after item $a_k$ is added to the first $k - 1$ items weighted by the preferences of the user $\theta^\ast$ over the topics. Roughly speaking, if item $a_k$ is diverse over higher-ranked items in a topic of user's interest, then that item is likely to be clicked.

If the user is attracted by item $a_k$, the user clicks on it. It follows that the expected number of clicks on list $A$ is $\langle c(A), \theta^\ast\rangle$, where
\begin{align}
  \langle c(A), \theta\rangle =
  \sum_{k = 1}^K \langle\Delta(a_k \mid \set{a_1, \dots, a_{k - 1}}), \theta\rangle
  \label{eq:yue number of clicks}
\end{align}
for any list $A$, preferences $\theta$, and topic coverage $c$.

%

\section{Diverse Cascade Model}
\label{sec:model}

Our work is motivated by the observation that none of the models in \cref{sec:background} explain both the position bias and diversity together. The optimal list in the cascade model are $K$ most attractive items (\cref{sec:cascade model}). These items are not guaranteed to be diverse, and hence the list may seem repetitive and unpleasing in practice. The diverse click model in \cref{sec:yue model} does not explain the position bias, that lower ranked items are less likely to be clicked. We illustrate this on the following example. Suppose that item $1$ completely covers topic $1$, $c(\set{1}) = (1, 0)$, and that all other items completely cover topic $2$, $c(\set{e}) = (0, 1)$ for all $e \in E \setminus \set{1}$. Let $c(S) = \max_{e \in S} c(\set{e})$, where the maximum is taken entry-wise, and $\theta^\ast = (0.5, 0.5)$. Then, under the model in \cref{sec:yue model}, item $1$ is clicked with probability $0.5$ in any list $A$ that contains it, irrespective of its position. This means that the position bias is not modeled well.

\subsection{Click Model}

We propose a new click model, which addresses both the aforementioned phenomena, diversity and position bias. The diversity is over $d$ topics, such as movie genres or restaurant types. The preferences of the user are a distribution over these topics, which is represented by a vector $\theta^\ast = (\theta^\ast_1, \dots, \theta^\ast_d)$.

We refer to our model as a \emph{diverse cascade model}. The user interacts in this model as follows. The user scans a list of $K$ items $A = (a_1, \dots, a_K) \in \Pi_K(E)$ from the first item $a_1$ to the last $a_K$, as described in \cref{sec:cascade model}. If the user examines item $a_k$, the user is attracted by it proportionally to its gains in topic coverage over the first $k - 1$ items weighted by the preferences of the user $\theta^\ast$ over the topics. The attraction probability of item $a_k$ is defined in \eqref{eq:attraction probability}. Roughly speaking, if item $a_k$ is diverse over higher-ranked items in a topic of user's interest, then that item is likely to attract the user. If the user is attracted by item $a_k$, the user clicks on it and does not examine any of the remaining items. If the user is not attracted by item $a_k$, then the user examines the next item $a_{k + 1}$. The first item is examined with probability one.

We assume that each item attracts the user independently, as in the cascade model (\cref{sec:cascade model}). Under this assumption, the probability that at least one item in $A$ is attractive is $f(A, \theta^\ast)$, where
\begin{align}
  f(A, \theta) =
  1 - \prod_{k = 1}^K (1 - \langle\Delta(a_k \mid \set{a_1, \dots, a_{k - 1}}), \theta\rangle)
  \label{eq:click probability}
\end{align}
for any list $A$, preferences $\theta$, and topic coverage $c$.

\subsection{Optimal List}
\label{sec:optimal list}

To the best of our knowledge, the list that maximizes \eqref{eq:click probability} under user preferences $\theta^\ast$,
\begin{align}
  \textstyle
  A^\ast = \argmax_{A \in \Pi_K(E)} f(A, \theta^\ast)\,,
  \label{eq:optimal list}
\end{align}
cannot be computed efficiently. Therefore, we propose a greedy algorithm that maximizes $f(A, \theta^\ast)$ approximately. The algorithm chooses $K$ items sequentially. The $k$-th item $a_k$ is chosen such that it maximizes its gain over previously chosen items $a_1, \dots, a_{k - 1}$. In particular, for any $k \in [K]$,
\begin{align}
  \textstyle
  a_k = \argmax_{e \in E \setminus \set{a_1, \dots, a_{k - 1}}}
  \langle\Delta(e \mid \set{a_1, \dots, a_{k - 1}}), \theta^\ast\rangle\,.
  \label{eq:greedy maximization}
\end{align}
We would like to comment on the quality of the above approximation. First, although the value of adding any item $e$ diminishes with more previously added items, $f(A, \theta^\ast)$ is not a set function of $A$ because its value depends on the order of items in $A$. Therefore, we do not maximize a monotone and submodular set function, and thus we do not have the well-known $1 - 1 / e$ approximation ratio \cite{nemhauser78approximation}. Nevertheless, we can still establish the following guarantee.

\begin{theorem}
\label{thm:approximation_ratio} For any topic coverage $c$ and user preferences $\theta^*$, let $A^{\text{greedy}}$ be the solution computed by the greedy algorithm in \eqref{eq:greedy maximization}. Then
\begin{align}
  \!\!\!
  f(A^{\text{greedy}}, \theta^*) \geq (1 - 1 / e)
  \max \left\{\frac{1}{K}, 1 - \frac{K - 1}{2} c_{\max}\right\} f(A^\ast, \theta^\ast)\,,
\end{align}
where $c_{\max} = \max_{e \in E} \langle c(\{e\}), \theta^*\rangle$ is the maximum click probability. In other words, the approximation ratio of the greedy algorithm is $ (1 - 1 / e) \max \left \{\frac{1}{K}, 1- \frac{K-1}{2} c_{\max}\right\}$.
\end{theorem}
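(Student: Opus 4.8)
The plan is to reduce the analysis of $f$ to that of the \emph{set} function $g(A) = \langle c(A), \theta^*\rangle$ from \eqref{eq:yue number of clicks}. First I would note that the telescoping sum in \eqref{eq:yue number of clicks} gives $g(A) = \sum_{k=1}^K \langle \Delta(a_k \mid \set{a_1,\dots,a_{k-1}}), \theta^*\rangle = \langle c(\set{a_1,\dots,a_K}), \theta^*\rangle$, so $g$ depends only on the \emph{set} of items in $A$, not their order, and it is monotone and submodular because each $c_j$ is and $\theta^* \geq 0$. Crucially, the greedy rule \eqref{eq:greedy maximization} that produces $A^{\text{greedy}}$ maximizes exactly the marginal gain of $g$, so $A^{\text{greedy}}$ is the output of the standard greedy algorithm for maximizing the monotone submodular $g$ subject to the cardinality constraint $K$. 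The classical guarantee \cite{nemhauser78approximation} then yields $g(A^{\text{greedy}}) \geq (1 - 1/e)\, g(A^\ast)$, since the $K$ items of $A^\ast$ form a feasible set and $g$ is order-independent.

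The remaining work is to sandwich $f$ between multiples of $g$. Writing $p_k = \langle \Delta(a_k \mid \set{a_1,\dots,a_{k-1}}), \theta^*\rangle$, so that $f(A,\theta^*) = 1 - \prod_{k=1}^K (1 - p_k)$ and $g(A) = \sum_{k=1}^K p_k$, I would first use the Weierstrass product inequality $\prod_k (1 - p_k) \geq 1 - \sum_k p_k$ to obtain the upper bound $f(A^\ast, \theta^*) \leq g(A^\ast)$. For the lower bound on $f(A^{\text{greedy}}, \theta^*)$ I would prove two inequalities and keep the better one. The easy one is $f(A,\theta^*) \geq \max_k p_k \geq \tfrac{1}{K} g(A)$, obtained by discarding all but one factor of the product; chaining this with the two bounds above gives the $\tfrac{1}{K}(1 - 1/e)$ branch.

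The sharper branch is where the real work lies. I would first note that submodularity forces each attraction probability to be small: $p_k \leq \langle c(\set{a_k}), \theta^*\rangle \leq c_{\max}$, because the marginal gain of an item over any set is at most its gain over the empty set. A second-order (Bonferroni) truncation of the inclusion--exclusion expansion of $\prod_k(1-p_k)$ then gives $f(A,\theta^*) \geq \sum_k p_k - \sum_{i<j} p_i p_j$. The key estimate is to control the quadratic correction: using $p_i p_j \leq c_{\max}\tfrac{p_i + p_j}{2}$ (valid since $\min(p_i,p_j) \leq \tfrac{p_i+p_j}{2}$ and the smaller factor is at most $c_{\max}$) together with the fact that each index occurs in exactly $K-1$ pairs, one gets $\sum_{i<j} p_i p_j \leq \tfrac{(K-1)c_{\max}}{2}\sum_k p_k$, hence $f(A,\theta^*) \geq \bigl(1 - \tfrac{K-1}{2}c_{\max}\bigr) g(A)$. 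Applying this to $A^{\text{greedy}}$ and chaining with $g(A^{\text{greedy}}) \geq (1-1/e)\,g(A^\ast) \geq (1-1/e)\,f(A^\ast,\theta^*)$ produces the second branch, and taking the maximum of the two branches finishes the proof.

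The main obstacle is conceptual rather than computational: recognizing that although $f$ is \emph{not} a submodular set function (its value depends on the order of $A$), its linearized surrogate $g$ is, and that the greedy rule is blind to this distinction since it optimizes the marginals of $g$. After this reduction the only delicate step is the refined lower bound; the Bonferroni truncation is a valid lower bound precisely because $\sum_{i<j}p_ip_j$ is the first discarded term and all $p_k \in [0,1]$, and the resulting factor $1 - \tfrac{K-1}{2}c_{\max}$ is only informative in the small-$c_{\max}$ regime, which is exactly where it improves on $1/K$.
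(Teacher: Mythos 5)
Your proposal is correct, and its skeleton matches the paper's proof exactly: reduce to the order-independent surrogate $g(A) = \langle c(A), \theta^*\rangle$, observe that the greedy rule \eqref{eq:greedy maximization} is precisely the standard greedy algorithm for the monotone submodular set function $g$ (so Nemhauser et al.\ gives $g(A^{\text{greedy}}) \geq (1-1/e)\,g(A^\ast)$), sandwich $f$ between $\max\left\{\tfrac{1}{K},\, 1-\tfrac{K-1}{2}c_{\max}\right\} g$ and $g$, and chain. Where you genuinely diverge is in \emph{how} the sandwich is established. The paper proves it as a standalone lemma (its Lemma~1) for arbitrary $b_1,\dots,b_K \in [0,B]$, with both the upper bound and the sharper lower bound $1-\tfrac{K-1}{2}B$ handled by induction on $K$; the lower-bound induction uses a symmetrization step, averaging the identity $1-\prod_{k}(1-b_k) = (1-b_i)\bigl[1-\prod_{k\neq i}(1-b_k)\bigr] + b_i$ over all choices of the removed index $i$, which is the least obvious move in their argument. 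You instead get the same inequality non-inductively: a second-order Bonferroni truncation $1-\prod_k(1-p_k) \geq \sum_k p_k - \sum_{i<j} p_i p_j$, followed by $p_i p_j \leq c_{\max}\tfrac{p_i+p_j}{2}$ and the count that each index lies in $K-1$ pairs. Both routes are valid and yield identical constants; yours is arguably more transparent about \emph{why} the factor $\tfrac{K-1}{2}c_{\max}$ appears (it is literally the pairwise-overlap correction), while the paper's inductive lemma is self-contained and also records tightness of the $1/K$ branch. One wording slip to fix: in justifying $p_i p_j \leq c_{\max}\tfrac{p_i+p_j}{2}$ you should bound the \emph{larger} factor by $c_{\max}$ (giving $p_i p_j \leq c_{\max}\min(p_i,p_j) \leq c_{\max}\tfrac{p_i+p_j}{2}$), not the smaller one; since both factors are at most $c_{\max}$ by submodularity, the inequality itself stands.
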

\begin{proof}
The proof is in \cref{sec:approximation_ratio}.
\end{proof}

Note that when $c_{\max}$ in \cref{thm:approximation_ratio} is small, the approximation ratio is close to $1 - 1 / e$. This is common in ranking problems where the maximum click probability $c_{\max}$ tends to be small. In \cref{sec:approximation ratio}, we empirically show that our approximation ratio is close to one in practice, which is significantly better than suggested in \cref{thm:approximation_ratio}.



\section{Diverse Cascading Bandit}
\label{sec:bandit}

In this section, we present an online learning variant of the diverse cascade model (\cref{sec:model}), which we call a \emph{diverse cascading bandit}. An instance of this problem is a tuple $(E, c, \theta^\ast, K )$, where $E = [L]$ represents a ground set of $L$ items, $c$ is the topic coverage function in \cref{sec:yue model}, $\theta^\ast$ are user preferences in \cref{sec:model}, and $K \leq L$ is the number of recommended items. The preferences $\theta^\ast$ are unknown to the learning agent.

Our learning agent interacts with the user as follows. At time $t$, the agent recommends a list of $K$ items $\rnd{A}_t = (\rnd{a}^t_1, \dots, \rnd{a}^t_K) \in \Pi_K(E)$. The attractiveness of item $a_k$ at time $t$, $\rnd{w}_t(\rnd{a}^t_k)$, is a realization of an independent Bernoulli random variable with mean $\langle\Delta(\rnd{a}^t_k \mid \{\rnd{a}^t_1, \dots, \rnd{a}^t_{k - 1}\}), \theta^\ast\rangle$. The user examines the list from the first item $\rnd{a}^t_1$ to the last $\rnd{a}^t_K$ and clicks on the first attractive item. The \emph{feedback} is the index of the click, $\rnd{C}_t = \min \set{k \in [K]: \rnd{w}_t(\rnd{a}^t_k) = 1}$, where we assume that $\min \emptyset = \infty$. That is, if the user clicks on an item, then $\rnd{C}_t \leq K$; and if the user does not click on any item, then $\rnd{C}_t = \infty$. We say that item $e$ is \emph{examined} at time $t$ if $e = \rnd{a}^t_k$ for some $k \in [\min \set{\rnd{C}_t, K}]$. Note that the attractiveness of all examined items at time $t$ can be computed from $\rnd{C}_t$. In particular, $\rnd{w}_t(\rnd{a}^t_k) = \I{\rnd{C}_t = k}$ for any $k \in [\min \set{\rnd{C}_t, K}]$. The \emph{reward} is defined as $\rnd{r}_t = \I{\rnd{C}_t \leq K}$. That is, the reward is one if the user is attracted by at least one item in $\rnd{A}_t$; and zero otherwise.


The goal of the learning agent is to maximize its expected cumulative reward. This is equivalent to minimizing the expected cumulative regret with respect to the \emph{optimal list} in \eqref{eq:optimal list}. The regret is formally defined in \cref{sec:analysis}.

%

\section{Algorithm $\cascadelsb$}
\label{sec:algorithm}

\begin{algorithm}[t]
  \caption{$\cascadelsb$ for solving diverse cascading bandits.}
  \label{alg:main}
  \begin{algorithmic}[1]
    \State \textbf{Inputs:} Tunable parameters $\sigma>0$ and $\alpha>0$ (\cref{sec:analysis})
    \Statex \vspace{-0.05in}
    \State $\rnd{M}_0 \gets I_d$, \ $\rnd{B}_0 \gets \mathbf{0}$
    \Comment{Initialization}
    \For{$t = 1, \dots, n$}
      \State $\bar{\theta}_{t - 1} \gets \sigma^{-2} \rnd{M}_{t - 1}^{-1} \rnd{B}_{t - 1}$
      \Comment{Regression estimate of $\theta^\ast$}
      \Statex \vspace{-0.05in}
      \State $S \gets \emptyset$
      \Comment{Recommend list $A_t$ and receive feedback $C_t$}
      \For{$k = 1, \dots, K$}
        \ForAll{$e \in E \setminus S$}
          \State $x_e \gets \Delta(e \mid S)$
        \EndFor
        \State $\displaystyle \rnd{a}^t_k \gets \argmax_{e \in E \setminus S}
        \left[x_e\transpose \bar{\theta}_{t - 1} + \alpha \sqrt{x_e\transpose \rnd{M}_{t - 1}^{-1} x_e}\right]$
        \State $S \gets S \cup \set{\rnd{a}^t_k}$
      \EndFor
      \State Recommend list $\rnd{A}_t \gets (\rnd{a}^t_1, \dots, \rnd{a}^t_K)$
      \State Observe click $\rnd{C}_t \in \set{1, \dots, K, \infty}$
      \Statex \vspace{-0.05in}
      \State $\rnd{M}_t \gets \rnd{M}_{t - 1}$, \ $\rnd{B}_t \gets \rnd{B}_{t - 1}$
      \Comment{Update statistics}
      \For{$k = 1, \dots, \min \set{\rnd{C}_t, K}$}
        \State $x_e \gets \Delta\left(\rnd{a}^t_k \ \middle| \ \set{\rnd{a}^t_1, \dots, \rnd{a}^t_{k - 1}}\right)$
        \State $\rnd{M}_t \gets \rnd{M}_t + \sigma^{-2} x_e x_e\transpose$
        \State $\rnd{B}_t \gets \rnd{B}_t + x_e \I{\rnd{C}_t = k}$
      \EndFor
	\EndFor
  \end{algorithmic}
\end{algorithm}

Our algorithm for solving diverse cascading bandits is presented in \cref{alg:main}. We call it $\cascadelsb$, which stands for a \emph{cascading linear submodular bandit}. We choose this name because the attraction probability of items is a linear function of user preferences and a submodular function of items.

The algorithm knows the gains in topic coverage $\Delta(e \mid S)$ in \eqref{eq:topic coverage gain}, for any item $e \in E$ and set $S \subseteq E$. It does not know the user preferences $\theta^\ast$ and estimates them through repeated interactions with the user. It also has two tunable parameters $\sigma>0$ and $\alpha>0$, where $\sigma$ controls the growth rate of the Gram matrix (line $16$) and $\alpha$ controls the degree of optimism (line $9$).

At each time $t$, $\cascadelsb$ has three stages. In the first stage (line $4$), we estimate $\theta^\ast$ as $\bar{\theta}_{t - 1}$ by solving a least-squares problem. Specifically, we take all observed topic gains and responses up to time $t$, which are summarized in $\rnd{M}_{t - 1}$ and $\rnd{B}_{t - 1}$, and then estimate $\bar{\theta}_{t - 1}$ that fits these responses the best.

In the second stage (lines $5$--$12$), we recommend the best list of items under $\bar{\theta}_{t - 1}$ and $\rnd{M}_{t - 1}$. This list is generated by the greedy algorithm from \cref{sec:optimal list}, where the attraction probability of item $e$ is overestimated as $x_e\transpose \bar{\theta}_{t - 1} + \alpha \sqrt{x_e\transpose \rnd{M}_{t - 1}^{-1} x_e}$ and $x_e$ is defined in line $8$. This optimistic overestimate is known as the \emph{upper confidence bound (UCB)} \cite{auer02finitetime}. 

In the last stage (lines $13$--$17$), we update the Gram matrix $\rnd{M}_t$ by the outer product of the observed topic gains, and the response matrix $\rnd{B}_t$ by the observed topic gains weighted by their clicks.

The time complexity of each iteration of $\cascadelsb$ is $\mathcal{O}(d^3 + K L d^2)$. The Gram matrix $\rnd{M}_{t - 1}$ is inverted in $\mathcal{O}(d^3)$ time. The term $K L d^2$ is due to the greedy maximization, where we select $K$ items out of $L$, based on their UCBs, each of which is computed in $\mathcal{O}(d^2)$ time. The update of statistics takes $\mathcal{O}(Kd^2)$ time. $\cascadelsb$ takes $\mathcal{O}(d^2)$ space due to storing the Gram matrix $M_t$.

%

\section{Analysis}
\label{sec:analysis}

Let $\gamma = (1 - 1 / e) \max \left\{\frac{1}{K}, 1 - \frac{K - 1}{2} c_{\max}\right\}$ and $A^\ast$ be the optimal solution in \eqref{eq:optimal list}. Then based on \cref{thm:approximation_ratio}, $A_t$ (line $11$ of \cref{alg:main}) is a $\gamma$-approximation at any time $t$. Similarly to Chen \etalabbr~\cite{Chen:2016:CMB:2946645.2946695}, Vaswani \etalabbr~\cite{pmlr-v70-vaswani17a}, and Wen \etalabbr~\cite{NIPS2017_6895}, we define the \emph{$\gamma$-scaled $n$-step regret} as
\begin{equation}
  R^{\gamma}(n) = \sum_{t=1}^n \E{  f(A^*, \theta^\ast) - f(A_t, \theta^\ast)/\gamma}{},
  \label{analysis:cumulative_regret}
\end{equation}
where the scaling factor $1 / \gamma$ accounts for the fact that $A_t$ is a $\gamma$-approximation at any time $t$. This is a natural performance metric in our setting. This is because even the offline variant of our problem in \eqref{eq:optimal list} cannot be solved optimally computationally efficiently. Therefore, it is unreasonable to assume that an online algorithm, like $\cascadelsb$, could compete with $A^\ast$. Under the scaled regret, $\cascadelsb$ competes with comparable computationally-efficient offline approximations. Our main theoretical result is below.

\begin{theorem}
\label{theorem:main}
Under the above assumptions, for any $\sigma>0$ and any 
\begin{align}
\label{analysis:c}
\alpha \geq \frac{1}{\sigma} \sqrt{d \log \left( 1 + \frac{nK}{d \sigma^2}\right) + 2 \log \left( n \right)} + \|\theta^* \|_2
\end{align}
in \cref{alg:main}, where $\|\theta^* \|_2 \leq \|\theta^* \|_1 \leq 1$, we have
\begin{equation}
\label{analysis:regret_bound}
R^{\gamma}(n) \leq \frac{2\alpha K}{\gamma}   \sqrt{\frac{dn \log \left[ 1 + \frac{nK}{d \sigma^2}\right]}{\log \left( 1+ \frac{1}{\sigma^2}\right)}}+1.
\end{equation}
\end{theorem}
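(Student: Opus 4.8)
The plan is to follow the optimism-in-the-face-of-uncertainty template for linear bandits, adapted to the cascading feedback and the $\gamma$-approximate greedy oracle. \textbf{First}, I would establish a high-probability \emph{confidence event}. With $M_{t-1}$ and $\bar\theta_{t-1}$ as in \cref{alg:main}, the estimator $\bar\theta_{t-1}$ is exactly ridge regression with regularizer $\sigma^2$ on the examined (feature, click) pairs, whose responses $\I{\rnd C_t = k}$ are bounded in $[0,1]$ and therefore conditionally $1$-subgaussian. Ordering the examined observations by round and then position, and applying the self-normalized tail inequality of Abbasi-Yadkori \etalabbr, one obtains that with probability at least $1 - 1/n$ the inequality $\abs{x\transpose(\bar\theta_{t-1} - \theta^*)} \le \alpha \sqrt{x\transpose M_{t-1}^{-1} x}$ holds simultaneously for all $t \le n$ and all $x$. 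The radius $\alpha$ in \eqref{analysis:c} is precisely the value making this hold: the $d\log(1 + nK/(d\sigma^2))$ term is the log-determinant (information) bound over the at most $nK$ observations, $2\log n$ is the $2\log(1/\delta)$ term for $\delta = 1/n$, and $\norm{\theta^*}_2$ absorbs the $\sigma\norm{\theta^*}$ bias of ridge regression (using $\norm{\theta^*}_2 \le \norm{\theta^*}_1 \le 1$). On the complementary event, of probability at most $1/n$, each per-step scaled regret is at most $f(A^*,\theta^*) \le 1$, contributing at most $n \cdot (1/n) = 1$ to $R^\gamma(n)$ — this is the additive $+1$ in \eqref{analysis:regret_bound}.

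\textbf{Next}, I would work on the confidence event. Let $p^*_k = \inner{\Delta(\rnd a^t_k \mid \set{\rnd a^t_1, \dots, \rnd a^t_{k-1}})}{\theta^*}$ and let $p^U_k = \min\set{x_k\transpose \bar\theta_{t-1} + \alpha\sqrt{x_k\transpose M_{t-1}^{-1} x_k}, 1}$ be the clipped UCB used to select $\rnd A_t$, with $x_k = \Delta(\rnd a^t_k \mid \set{\rnd a^t_1, \dots, \rnd a^t_{k-1}})$; the confidence event gives $p^*_k \le p^U_k \le p^*_k + 2\alpha\sqrt{x_k\transpose M_{t-1}^{-1} x_k}$. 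Since $f$ is monotone nondecreasing in each attraction probability, plugging the UCBs into the optimal list yields an optimistic value $f_U(A^*) \ge f(A^*,\theta^*)$; combined with the stated fact that $\rnd A_t$ is a $\gamma$-approximation of the optimistic objective, this gives $f(A^*,\theta^*) \le f_U(A^*) \le f_U(\rnd A_t)/\gamma$, so the per-step scaled regret is at most $(f_U(\rnd A_t) - f(\rnd A_t,\theta^*))/\gamma$. I would expand this difference with the product-telescoping identity $\prod_k(1 - p^*_k) - \prod_k(1 - p^U_k) = \sum_k (\prod_{j<k}(1 - p^U_j))(p^U_k - p^*_k)(\prod_{j>k}(1 - p^*_j))$, bounding the two products by $\prod_{j<k}(1 - p^*_j)$ and $1$. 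Crucially, $\prod_{j<k}(1 - p^*_j)$ is exactly the probability $W_{tk}$ that position $k$ is examined, so the per-step regret is at most $\frac{2\alpha}{\gamma}\sum_{k=1}^K W_{tk}\sqrt{x_k\transpose M_{t-1}^{-1} x_k}$.

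\textbf{Finally}, I would convert this weighted width-sum into the closed form. Cauchy-Schwarz over the $K$ positions gives $\sum_k W_{tk}\sqrt{a_{tk}} \le \sqrt K \sqrt{\sum_k W_{tk} a_{tk}}$ with $a_{tk} = x_k\transpose M_{t-1}^{-1} x_k$, and since the conditional expectation of $\sum_{k \text{ examined}} a_{tk}$ given the past equals $\sum_k W_{tk} a_{tk}$, the expected regret telescopes into the \emph{realized} examined widths. Cauchy-Schwarz over the $n$ rounds together with Jensen's inequality then reduces everything to bounding $\E{\sum_t \sum_{k \text{ examined}} a_{tk}}{}$. This is the elliptic-potential step: using $\det M_t \ge \det M_{t-1}(1 + \sigma^{-2}\sum_{k \text{ examined}} a_{tk})$ and the inequality $z \le \frac{z_{\max}}{\log(1 + z_{\max})}\log(1 + z)$ with $z_{\max} = K/\sigma^2$ (valid because $M_{t-1} \succeq I$ and $\norm{x_k}_2 \le 1$), one telescopes $\log(\det M_n / \det M_0) \le d\log(1 + nK/(d\sigma^2))$ to get $\sum_t \sum_{k \text{ examined}} a_{tk} \le \frac{K\, d\log(1 + nK/(d\sigma^2))}{\log(1 + K/\sigma^2)}$. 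Substituting back yields $\frac{2\alpha K}{\gamma}\sqrt{dn\log(1 + nK/(d\sigma^2))/\log(1 + K/\sigma^2)}$, and relaxing the denominator via $\log(1 + K/\sigma^2) \ge \log(1 + 1/\sigma^2)$ gives exactly \eqref{analysis:regret_bound}.

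\textbf{The main obstacle} is the partial-click feedback: the Gram matrix $M_t$ is updated only from \emph{examined} items, whereas the regret naturally decomposes over all $K$ positions. The resolution is the examination-probability weighting above, which must be tied to the realized matrix updates through the tower property, together with the fact that all $K$ positions in a round are scored against the single pre-round matrix $M_{t-1}$ rather than a continuously updated one; this batch structure is what forces the range $z_{\max} = K/\sigma^2$ in the potential argument and ultimately produces the leading factor $K$ (one $\sqrt K$ from Cauchy-Schwarz over positions, one from the batch potential bound). Checking that the conditional-expectation bookkeeping is valid — that the features and examination indicators are predictable, so the self-normalized martingale inequality genuinely applies — is the delicate part of the argument.
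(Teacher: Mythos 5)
Your proposal is correct and follows the same architecture as the paper's proof: a per-step good event whose failure probability $1/n$ contributes the additive $+1$; on that event, optimism plus the assumed $\gamma$-approximation property of $A_t$ with respect to the \emph{optimistic} objective reduce the per-step scaled regret to $\frac{1}{\gamma}\left[h(A_t,U_{t-1})-h(A_t,\bar w)\right]$; then the identical product-telescoping identity, bounding the trailing product by $1$, identifying $\prod_{j<k}(1-p^*_j)$ with the examination probability, and the tower-property step that ties the bound to the items actually used to update $M_t$. The only genuine divergence is in the final step. The paper invokes Lemmas 2 and 3 of Zong \etalabbr\ as black boxes, in particular a \emph{pathwise} worst-case bound $\sum_{t=1}^n\sum_{k=1}^{\min\{C_t,K\}}\sqrt{\Delta(A_t^k)^T M_{t-1}^{-1}\Delta(A_t^k)}\leq K\sqrt{dn\log\left(1+\frac{nK}{d\sigma^2}\right)/\log\left(1+\frac{1}{\sigma^2}\right)}$ on the sum of confidence widths. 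You instead derive the bound in-line: Cauchy--Schwarz over positions weighted by examination probabilities, Cauchy--Schwarz over rounds plus Jensen applied to \emph{squared} widths in expectation, and a batch elliptic-potential argument with $z_{\max}=K/\sigma^2$. This is self-contained, exposes where the extra $\sqrt{K}$ (beyond the Cauchy--Schwarz one) comes from, and actually yields the marginally tighter denominator $\log(1+K/\sigma^2)$ before you relax it to $\log(1+1/\sigma^2)$ to match \eqref{analysis:regret_bound}; it rests on the same implicit feature-norm assumption $\|\Delta(\cdot)\|_2\leq 1$ that the paper inherits from Zong \etalabbr. Similarly, your confidence radius is the explicit Abbasi-Yadkori self-normalized argument underlying the paper's cited Lemma 3, with the same accounting of the $d\log(1+nK/(d\sigma^2))$, $2\log n$, and $\|\theta^*\|_2$ terms. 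Both treatments share the same unproved premise, which the paper states explicitly after \cref{theorem:main}: that the greedy construction in \cref{alg:main} is a $\gamma$-approximate maximizer of $h(\cdot,U_{t-1})$, not merely of $f(\cdot,\theta^*)$.
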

\begin{proof}
The proof is in \cref{sec:proof}.
\end{proof}

\noindent
\cref{theorem:main} states that for any $\sigma>0$ and a sufficiently optimistic $\alpha$ for that $\sigma$, the regret bound in \eqref{analysis:regret_bound} holds. Specifically, if we choose $\sigma=1$ and
\[
\alpha = \sqrt{d \log \left( 1 + \frac{nK}{d }\right) + 2 \log \left( n \right)} + \eta
\]
for some $\eta \geq \|\theta^* \|_2$, then $R^{\gamma}(n) = \tilde{O} \left( dK \sqrt{n}/\gamma \right)$, where the
$\tilde{O}$ notation hides logarithmic factors. We now briefly discuss the tightness of this bound. The $\tilde{O} \left( \sqrt{n} \right)$-dependence on the time horizon $n$ is considered near-optimal in gap-free regret bounds. The $\tilde{O}(d)$-dependence on the number of features $d$ is standard in linear bandits~\cite{abbasi-yadkori11improved}. As we discussed above, the $O(1 / \gamma)$ factor is due to the fact that $A_t$ is a $\gamma$-approximation. The $\tilde{O}(K)$-dependence on the number of recommended items $K$ is due to the fact that the agent recommends $K$ items. We believe that this dependence can be reduced to $\tilde{O}(\sqrt{K})$ by a better analysis. We leave this for future work.

Finally, note that the list $\rnd{A}_t$ in $\cascadelsb$ is constructed greedily. However, our regret bound in \cref{theorem:main} does not make any assumption on how the list is constructed. Therefore, the bound holds for any algorithm where $A_t$ is a $\gamma$-approximation at any time $t$, for potentially different values of $\gamma$ than in our paper.

%

\section{Experiments}
\label{sec:experiments}

This section is organized as follows. In \cref{sec:approximation ratio}, we validate the approximation ratio of the greedy algorithm from \cref{sec:optimal list}. In \cref{sec:baselines}, we introduce our baselines and topic coverage function. A synthetic experiment, which highlights the advantages of our method, is presented in \cref{sec:synthetic}. In \cref{sec:experimental setting}, we describe our experimental setting for the real-world datasets. We evaluate our algorithm on three real-world problems in the rest of the sections. 

\subsection{Empirical Validation of the  Approximation Ratio}
\label{sec:approximation ratio}

\begin{table}
    \centering
    \caption{Approximation ratio of the greedy maximization (\cref{sec:optimal list}) for MovieLens $1$M dataset achieved by exhaustive search for different values of $K$.}
    \begin{tabular}{|c|c|}\hline
      $K$ & Approximation ratio \\ \hline
        1 & 1.0000 \\
        2 & 0.9926 \\
        3 & 0.9997 \\
        4 & 0.9986 \\ \hline
      \end{tabular}
      \label{tab:approximation ratio}
\end{table}

In \cref{sec:optimal list}, we showed that a near-optimal list can be computed greedily. The approximation ratio of the greedy algorithm is close to $1 - 1 / e$ when the maximum click probability is small. Now, we demonstrate empirically that the approximation ratio is close to one in a domain of our interest.

We experiment with MovieLens $1$M dataset from \cref{ssec:movie} (described later). The topic coverage and user preferences are set as in \cref{sec:experimental setting} (described later). We choose $100$ random users and items and vary the number of recommended items $K$ from $1$ to $4$. For each user and $K$, we compute the optimal list $A^\ast$ in \eqref{eq:optimal list} by exhaustive search. Let the corresponding greedy list, which is computed as in \eqref{eq:greedy maximization}, be $A^{\text{greedy}}$. Then $f(A^{\text{greedy}}, \theta^\ast) / f(A^\ast, \theta^\ast)$ is the approximation ratio under user preferences $\theta^\ast$.

For each $K$, we average approximation ratios over all users and report the averages in \cref{tab:approximation ratio}. The average approximation ratio is always more than $0.99$, which means that the greedy maximization in \eqref{eq:greedy maximization} is near optimal. We believe that this is due to the diminishing character of our objective (\cref{sec:optimal list}). The average approximation ratio is $1$ when $K = 1$. This is expected since the optimal list of length $1$ is the most attractive item under $\theta^\ast$, which is always chosen in the first step of the greedy maximization in \eqref{eq:greedy maximization}.
    
\subsection{Baselines and Topic Coverage}
\label{sec:baselines}

We compare $\cascadelsb$ to three baselines. The first baseline is $\lsbgreedy$ \cite{guestrin2011nips}. $\lsbgreedy$ captures diversity but differs from $\cascadelsb$ by assuming feedback at all positions. The second baseline is $\cascadelinucb$ \cite{zong16cascading}, an algorithm for cascading bandits with a linear generalization across items \cite{wen2015efficient,abbasi-yadkori11improved}. To make it comparable to $\cascadelsb$, we set the feature vector of item $e$ as $x_e = \Delta(e \mid \emptyset)$. This guarantees that $\cascadelinucb$ operates in the same feature space as $\cascadelsb$; except that it does not model interactions due to higher ranked items, which lead to diversity. The third baseline is $\cascadeklucb$ \cite{kveton15cascading}, a near-optimal algorithm for cascading bandits that learns the attraction probability of each item independently. This algorithm is expected to perform poorly when the number of items is large. Also, it does not model diversity.

All compared algorithms are evaluated by the $n$-step regret $R^{\gamma}(n)$ with $\gamma = 1$, as defined in \eqref{analysis:cumulative_regret}. We approximate the optimal solution $A^\ast$ by the greedy algorithm in \eqref{eq:greedy maximization}. The learning rate $\sigma$ in $\cascadelsb$ is set to $0.1$. The other parameter $\alpha$ is set to the lowest permissible value, according to \eqref{analysis:c}. The corresponding parameter $\sigma$ in $\lsbgreedy$ and $\cascadelinucb$ is also set to $0.1$. All remaining parameters in the algorithms are set as suggested by their theoretical analyses. $\cascadeklucb$ does not have any tunable parameter.

The topic coverage in \cref{sec:yue model} can be defined in many ways. In this work, we adopt the probabilistic coverage function proposed in El-Arini \etalabbr~\cite{el2009turning},
\begin{align}
  c(S) = \left(1 - \prod_{e \in S} (1 - \bar{w}(e, 1)), \ \dots, \ 1 - \prod_{e \in S} (1 - \bar{w}(e, d))\right)\,,
  \label{eq:topicCoverage}
\end{align}
where $\bar{w}(e, j) \in [0, 1]$ is the \emph{attractiveness} of item $e \in E$ in topic $j \in [d]$. Under the assumption that items cover topics independently, the $j$-th entry of $c(S)$ is the probability that at least one item in $S$ covers topic $j$. Clearly, the proposed function in~\eqref{eq:topicCoverage} is monotone and submodular in each entry of $c(S)$, as required in \cref{sec:yue model}.

\subsection{Synthetic Experiment}
\label{sec:synthetic}

The goal of this experiment is to illustrate the need for modeling both diversity and partial-click feedback. We consider a problem with $L = 53$ items and $d = 3$ topics. We recommend $K = 2$ items and simulate a single user whose preferences are $\theta^\ast = (0.6, 0.4, 0.0)$. The attractiveness of items $1$ and $2$ in topic $1$ is $0.5$, and $0$ in all other topics. The attractiveness of item $3$ in topic $2$ is $0.5$, and $0$ in all other topics. The remaining $50$ items do not belong to any preferred topic of the user. Their attractiveness in topic $3$ is $1$, and $0$ in all other topics. These items are added to make the learning problem harder, as well as to model a real-world scenario where most items are likely to be unattractive to any given user.

The optimal recommended list is $A^\ast = (1, 3)$. This example is constructed so that the optimal list contains only one item from the most preferred topic, either item $1$ or $2$. The $n$-step regret of all the algorithms is shown in \cref{fig:simulated}. We observe several trends.

First, the regret of $\cascadelsb$ flattens and does not increase with the number of steps $n$. This means that $\cascadelsb$ learns the optimal solution.

Second, the regret of $\lsbgreedy$ grows linearly with the number of steps $n$, which means $\lsbgreedy$ does not learn the optimal solution. This phenomenon can be explained as follows. When $\lsbgreedy$ recommends $A^\ast = (1, 3)$, it severely underestimates the preference for topic $2$ of item $3$, because it assumes feedback at the second position even if the first position is clicked. Because of this, $\lsbgreedy$ switches to recommending item $2$ at the second position at some point in time. This is suboptimal. After some time, $\lsbgreedy$ swiches back to recommending item $3$, and then it oscillates between items $2$ and $3$. Therefore, $\lsbgreedy$ has a linear regret and performs poorly.

Third, the regret of $\cascadelinucb$ is linear because it converges to list $(1, 2)$. The items in this list belong to a single topic, and therefore are redundant in the sense that a higher click probability can be achieved by recommending a more diverse list $A^\ast = (1, 3)$.

Finally, the regret of $\cascadeklucb$ also flattens, which means that the algorithm learns the optimal solution. However, because $\cascadeklucb$ does not generalize across items, it learns $A^\ast$ with an order of magnitude higher regret than $\cascadelsb$.

\begin{figure}[t]
  \centering
  \includegraphics[width = 8cm]{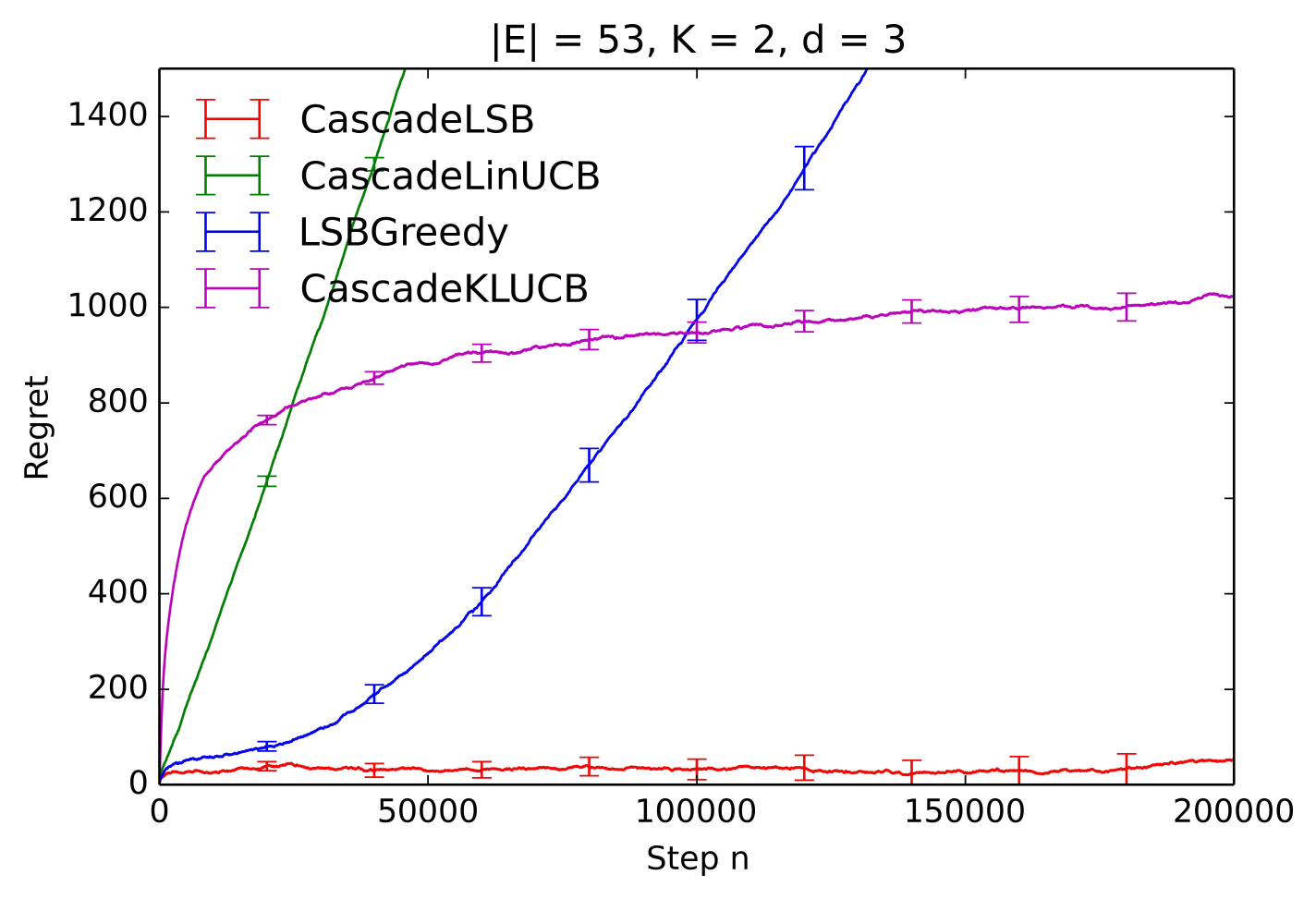}
  \caption{Evaluation on synthetic problem. \cascadelsb's regret is the least and sublinear. \lsbgreedy\ penalizes lower ranked items, thus oscillates between two lists making its regret linear. \cascadelinucb\'s regret is linear because it does not diversify. {{\tt CascadeKL$\mhyphen$UCB}}'s regret is sublinear, however, order of magnitude higher than \cascadelsb's regret.}
  \label{fig:simulated}
  \vspace{-0.5cm}
\end{figure}

\subsection{Real-World Datasets}
\label{sec:experimental setting}

\begin{figure*}[t]
  \centering
  \includegraphics[width=16cm, height=3.6cm]{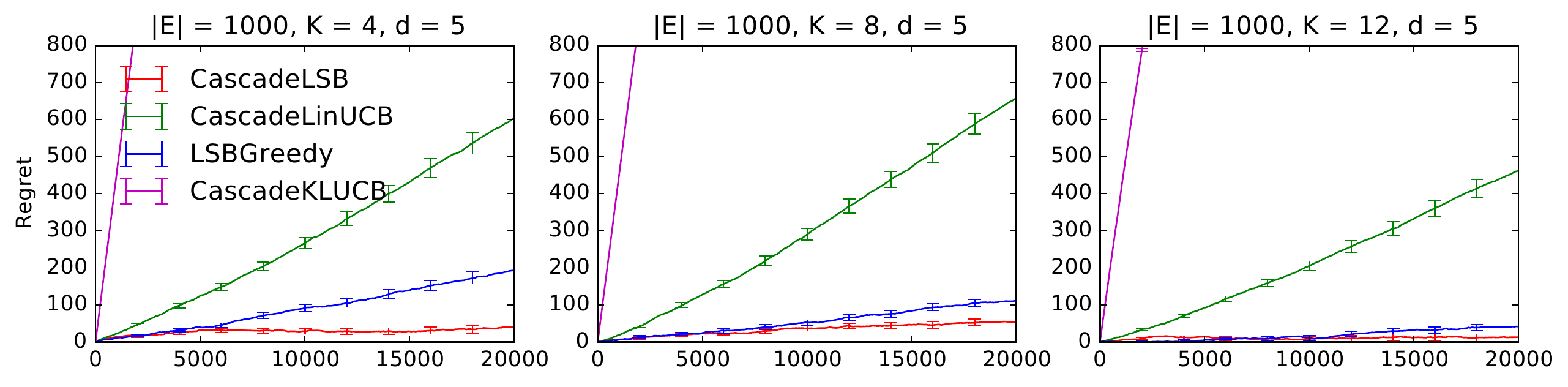}
  \includegraphics[width=16cm, height=3.6cm]{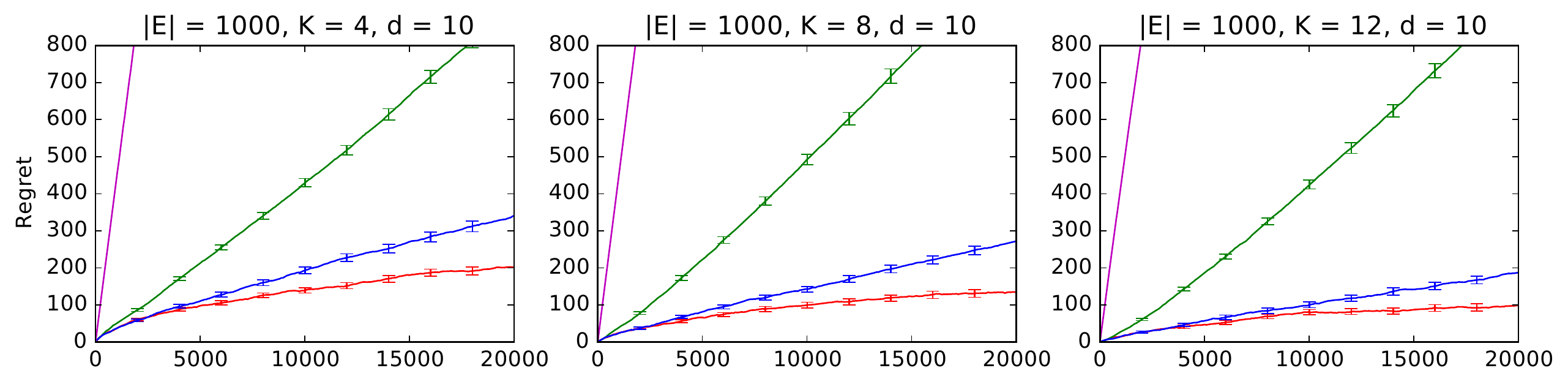}
  \includegraphics[width=16cm, height=3.6cm]{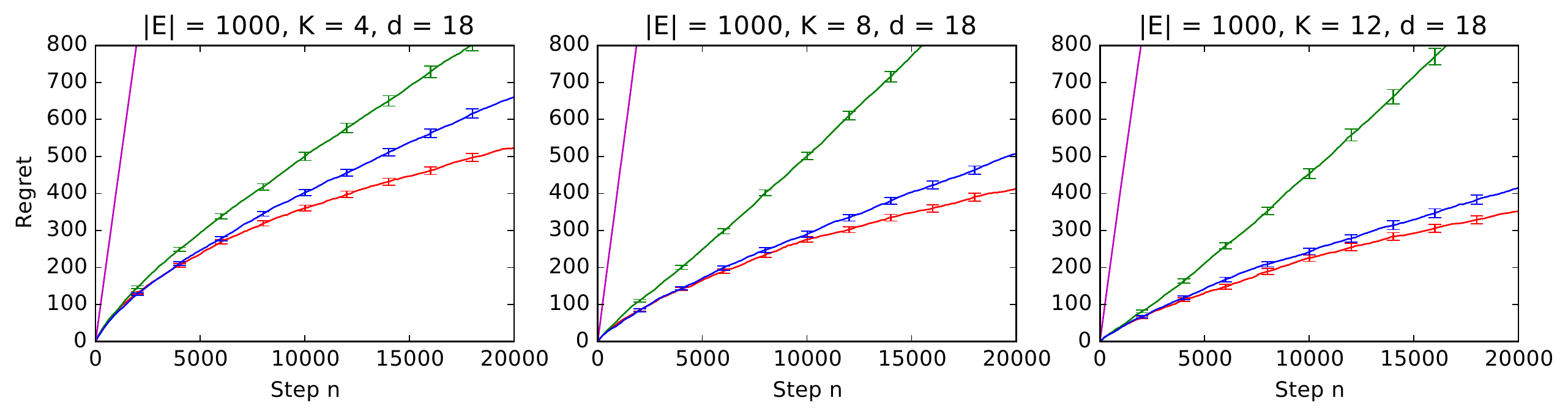}
  \caption{Evaluation on the MovieLens dataset. The number of topics $d$ varies with rows. The number of recommended items $K$ varies with columns. Lower regret means better performance, and sublinear curve represents learning the optimal list. 
  \cascadelsb\, is robust to both number of topics $d$ and size $K$ of the recommended list.}
  \label{fig:MLfinal}
\end{figure*}

Now, we assess \cascadelsb\, on real-world datasets. One approach to evaluating bandit policies without a live experiment is \emph{off-policy evaluation} \cite{li2011unbiased}. Unfortunately, off-policy evaluation is unsuitable for our problem because the number of actions, all possible lists, is exponential in $K$. Thus, we evaluate our policies by building an interaction model of users from past data. This is another popular approach and is adopted by most papers discussed in \cref{sec:related work}.

All of our real-world problems are associated with a set of users $U$, a set of items $E$, and a set of topics $[d]$. The relations between the users and items are captured by feedback matrix $F \in \set{0, 1}^{|U| \times |E|}$, where row $u$ corresponds to user $u \in U$, column $i$ corresponds to item $i \in  E$, and $F(u, i)$ indicates if user $u$ was attracted to item $i$ in the past. The relations between items and topics are captured by matrix $G \in \set{0, 1}^{|E| \times d}$, where row $i$ corresponds to item $i \in  E$, column $j$ corresponds to topic $j \in  [d]$, and $G(i, j)$ indicates if item $i$ belongs to topic $j$. Next, we describe how we build these matrices.

The \emph{attraction probability of item $i$ in topic $j$} is defined as the number of users who are attracted to item $i$ over all users who are attracted to at least one item in topic $j$. Formally,
\begin{align}
  \textstyle
  \bar{w}(i, j) =
  \sum\limits_{u \in U} F(u, i) G(i, j)
  \left[\sum\limits_{u \in U} \I{\exists i' \in E : F(u, i') G(i', j) > 0}\right]^{-1}\,.
  \label{eq:define_w}
\end{align}
Therefore, the attraction probability represents a relative worth of item $i$ in topic $j$. We illustrate this concept with the following example. Suppose that the item is popular, such as movie \emph{Star Wars} in topic \emph{Sci-Fi}. Then \emph{Star Wars} attracts many users who are attracted to at least one movie in topic \emph{Sci-Fi}, and its attraction probability in topic \emph{Sci-Fi} should be close to one.

The \emph{preference of a user $u$ for topic $j$} is the number of items in topic $j$ that attracted user $u$ over the total number of topics of all items that attracted user $u$, i.e., 
\begin{align}
  \textstyle
  \theta^\ast_j =
  \sum\limits_{i \in E} F(u, i) G(i, j)
  \left[\sum\limits_{j' \in [d]}\sum\limits_{i \in E} F(u, i) G(i, j')\right]^{-1}\,.
  \label{eq:define_theta}
\end{align}
Note that $\sum_{j = 1}^d \theta^\ast_j = 1$. Therefore, $\theta^\ast = (\theta^\ast_1, \dots, \theta^\ast_d)$ is a probability distribution over topics for user $u$.

We divide users randomly into two halves to form training and test sets. This means that the feedback matrix $F$ is divided into two matrices, $F_\text{train}$ and $F_\text{test}$. The parameters that define our click model, which are computed from $\bar{w}(i, j)$ in \eqref{eq:topicCoverage} and $\theta^\ast$ in \eqref{eq:define_theta}, are estimated from $F_\text{test}$ and $G$. The topic coverage features in $\cascadelsb$, which are computed from $\bar{w}(i, j)$ in \eqref{eq:topicCoverage}, are estimated from $F_\text{train}$ and $G$. This split ensures that the learning algorithm does not have access to the optimal features for estimating user preferences, which is likely to happen in practice. In all experiments, our goal is to maximize the probability of recommending at least one attractive item. The experiments are conducted for $n = 20$k steps and averaged over $100$ random problem instances, each of which corresponds to a randomly chosen user.

\subsection{Movie Recommendation}
\label{ssec:movie}

\begin{figure*}[t]
  \centering
  \includegraphics[width=16cm, height = 3.6cm]{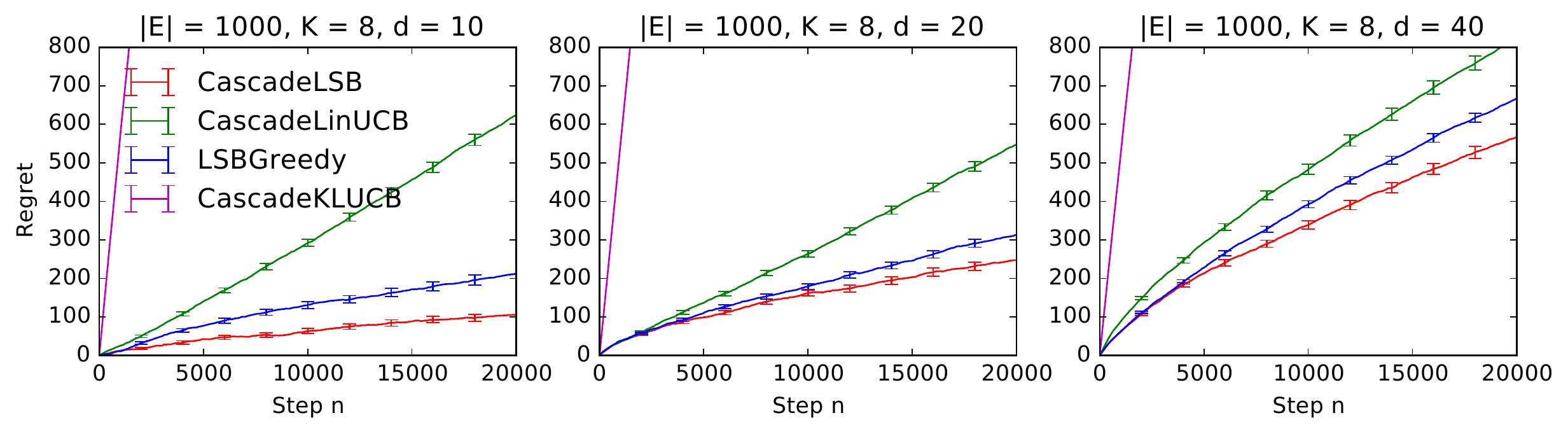}
  \vspace{-0.1in}
  \caption{Evaluation on the Million Song dataset. $K = 8$ and vary $d$ from $10$ to $40$. Lower regret means better performance, and sublinear curve represents learning the optimal list.
 \cascadelsb\ is shown to be robust to number of topics $d$.}
  \label{fig:MS_k8}
\end{figure*}

Our first real-world experiment is in the domain of movie recommendations. We experiment with the MovieLens $1$M dataset\footnote{http://grouplens.org/datasets/movielens/}. The dataset contains $1$M ratings of $4$k movies by $6$k users who joined MovieLens in the year $2000$. We extract $|E| = 1000$ most rated movies and $|U| = 1000$ most rating users. These active users and items are extracted just for simplicity and to have confident estimates of $\bar{w}(.)$ \eqref{eq:define_w} and $\theta^\ast$ \eqref{eq:define_theta} for experiments.

We treat movies and their genres as items and topics, respectively. The ratings are on a $5$-star scale. We assume that user $i$ is attracted to movie $j$ if the user rated that movie with $5$ stars, $F(i, j) = \I{\text{user $i$ rated movie $j$ with $5$ stars}}$. For this definition of attraction, about $8\%$ of user-item pairs in our dataset are attractive. We assume that a movie belongs to a genre if it is tagged with that particular genre. In this experiment, we vary the number of topics $d$ from $5$ to $18$ (maximum possible genres in MovieLens 1M dataset), as well as the number of recommended items $K$ from $4$ to $12$. The topics are sorted in the descending order of the number of items in them. While varying topics, we choose the most popular ones.

Our results are reported in \cref{fig:MLfinal}. We observe that $\cascadelsb$ has the lowest regret among all compared algorithms for all $d$ and $K$. This suggests that $\cascadelsb$ is robust to choice of both the parameters $d$ and $K$. For $d = 18$, $\cascadelsb$ achieves almost $20\%$ lower regret than the best performing baseline, $\lsbgreedy$. $\lsbgreedy$ has a higher regret than $\cascadelsb$ because it learns from unexamined items. $\cascadeklucb$ performs the worst because it learns one attraction weight per item. This is impractical when the number of items is large, as in this experiment. The regret of $\cascadelinucb$ is linear, which means that it does not learn the optimal solution. This shows that linear generalization in the cascade model is not sufficient to capture diversity. More sophisticated models of user interaction, such as the diverse cascade model (\cref{sec:model}), are necessary. At $d = 5$, the regret of $\cascadelsb$ is low. As the number of topics increase, the problems become harder and the regret of $\cascadelsb$ increases.

\subsection{Million Song Recommendation}
\label{ssec:millionsong}

Our next experiment is in song recommendation domain. We experiment with the Million Song dataset\footnote{http://labrosa.ee.columbia.edu/millionsong/}, which is a collection of audio features and metadata for one million pop songs. We extract $|E| = 1000$ most popular songs and $|U| = 1000$ most active users, as measured by the number of song-listening events. These active users and items provide more confident estimates of $\bar{w}(.)$ \eqref{eq:define_w} and $\theta^\ast$ \eqref{eq:define_theta} for experiments.

We treat songs and their genres as items and topics, respectively. We assume that a user $i$ is attracted to a song $j$ if the user had listened to that song at least $5$ times. Formally this is captured as $F(i, j) = \I{\text{user $i$ had listened to song $j$ at least $5$ times}}$. By this definition, about $3\%$ of user-item pairs in our dataset are attractive. We assume that a song belongs to a genre if it is tagged with that genre. Here, we fix the number of recommended items at $K = 8$ and vary the number of topics $d$ from $10$ to $40$.

Our results are reported in \cref{fig:MS_k8}. Again, we observe that $\cascadelsb$ has the lowest regret among all compared algorithms. This happens for all $d$, and we conclude that $\cascadelsb$ is robust to the choice of $d$. At $d = 40$, $\cascadelsb$ has about $15\%$ lower regret than the best performing baseline, $\lsbgreedy$. Compared to the previous experiment, $\cascadelinucb$ learns a better solution over time at $d = 40$. However, it still has about $50\%$ higher regret than $\cascadelsb$ at $n = 20$k. Again, $\cascadeklucb$ performs the worst in all the experiments.

\subsection{Restaurant Recommendation}
\label{ssec:restuarant}

\begin{figure*}[t]
  \centering
  \includegraphics[width=16cm, height=3.6cm]{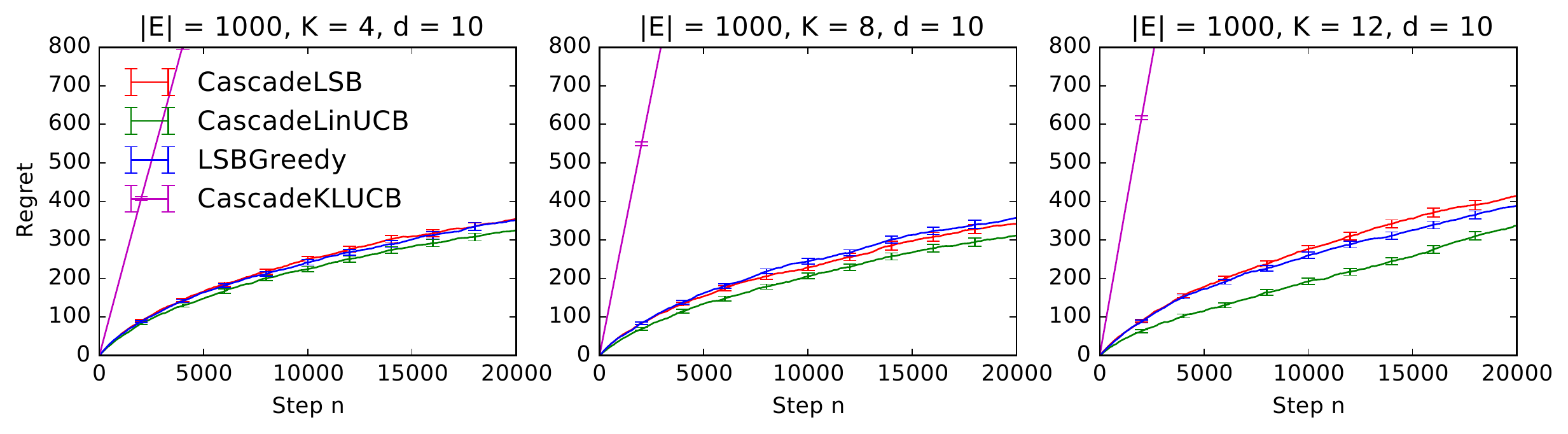}
  \vspace{-0.1in}
  \caption{Evaluation on the Yelp dataset. We fix $d = 10$ and vary $K$ from $4$ to $12$. Lower regret means better performance, and sublinear curve represents learning the optimal list. All the algorithms except $\cascadeklucb$ perform similar due to small attraction probabilities of items.}
  \label{fig:Yelp_k8}
\end{figure*}

Our last real-world experiment is in the domain of restaurant recommendation. We experiment with the Yelp Challenge dataset\footnote{https://www.yelp.com/dataset\_challenge, Round 9 Dataset}, which contains $4.1$M reviews written by $1$M users for $48$k restaurants in more than $600$ categories. Consistent with the above experiments, here again, we extract $|E| = 1000$ most reviewed restaurants and $|U| = 1000$ most reviewing users. 

We treat restaurants and their categories as items and topics, respectively. We assume that user $i$ is attracted to restaurant $j$ if the user rated that restaurant with at least $4$ stars, i.e., $F(i, j) = \I{\text{user $i$ rated restaurant $j$ with at least $4$ stars}}$. For this definition of attraction, about $3\%$ of user-item pairs in our dataset are attractive. We assume that a restaurant belongs to a category if it is tagged with that particular category. In this experiment, we fix the number of topics at $d = 10$ and vary the number of recommended items $K$ from $4$ to $12$.

Our results are reported in \cref{fig:Yelp_k8}. Unlike in the previous experiments, we observe that $\cascadelinucb$ performs comparably to $\cascadelsb$ and $\lsbgreedy$. We investigated this trend and discovered that this is because the attraction probabilities of items, as defined in \eqref{eq:define_w}, are often very small, such as on the order of $10^{-2}$. This means that the items do not cover any topic properly. In this setting, the gain in topic coverage due to any item $e$ over higher ranked items $S$, $\Delta(e \mid S)$, is comparable to $\Delta(e \mid \emptyset)$ when $\abs{S}$ is small. This follows from the definition of the coverage function in \eqref{eq:topicCoverage}. Now note that the former are the features in $\cascadelsb$ and $\lsbgreedy$, and the latter are the features in $\cascadelinucb$. Since the features are similar, all algorithms choose lists with respect to similar objectives, and their solutions are similar. 


%

\section{Related Work}
\label{sec:related work}

Our work is closely related to two lines of work in online learning to rank, in the cascade model \cite{kveton15cascading,combes15learning} and with diversity \cite{radlinski08learning,guestrin2011nips}.

\emph{Cascading bandits} \cite{kveton15cascading,combes15learning} are an online learning model for learning to rank in the cascade model \cite{craswell08experimental}. Kveton \etalabbr~\cite{kveton15cascading} proposed a near-optimal algorithm for this problem, $\cascadeklucb$. Several papers extended cascading bandits \cite{kveton15cascading,combes15learning,kveton15combinatorial,katariya16dcm,zong16cascading,li2016contextual}. The most related to our work are linear cascading bandits of Zong \etalabbr~\cite{zong16cascading}. Their proposed algorithm, $\cascadelinucb$, assumes that the attraction probabilities of items are a linear function of the features of items, which are known; and an unknown parameter vector, which is learned. This work does not consider diversity. We compare to $\cascadelinucb$ in \cref{sec:experiments}.

Yue and Guestrin \cite{guestrin2011nips} studied the problem of online learning with diversity, where each item covers a set of topics. They also proposed an efficient algorithm for solving their problem, $\lsbgreedy$. $\lsbgreedy$ assumes that if the item is not clicked, then the item is not attractive and penalizes the topics of this item. We compare to $\lsbgreedy$ in \cref{sec:experiments} and show that its regret can be linear when clicks on lower-ranked items are biased due to clicks on higher-ranked items.

The difference of our work from $\cascadelinucb$ and $\lsbgreedy$ can be summarized as follows. In \cascadelinucb, the click on the $k$-th recommended item depends on the features of that item and clicks on higher-ranked items. In \lsbgreedy, the click on the $k$-th item depends on the features of all items up to position $k$. In \cascadelsb, the click on the $k$-th item depend on the features of all items up to position $k$ and clicks on higher-ranked items.

Raman et al.~\cite{raman2012online} also studied the problem of online diverse recommendations. The key idea of this work is preference feedback among rankings. That is, by observing clicks on the recommended list, one can construct a ranked list that is more preferred than the one presented to the user. However, this approach assumes full feedback on the entire ranked list. When we consider partial feedback, such as in the cascade model, a model based on comparisons among rankings cannot be applied. Therefore, we do not compare to this approach.

\emph{Ranked bandits} are a popular approach to online learning to rank \cite{radlinski08learning,slivkins13ranked}. The key idea in ranked bandits is to model each position in the recommended list as a separate bandit problem, which is solved by a \emph{base bandit algorithm}. The optimal item at the first position is clicked by most users, the optimal item at the second position is clicked by most users that do not click on the first optimal item, and so on. This list is diverse in the sense that each item in this list is clicked by many users that do not click on higher-ranked items. This notion of diversity is different from our paper, where we learn a list of diverse items over topics for a single user. Learning algorithms for ranked bandits perform poorly in cascade-like models \cite{kveton15cascading,katariya16dcm} because they learn from clicks at all positions. We expect similar performance of other learning algorithms that make similar assumptions \cite{kohli13fast}, and thus do not compare to them.

None of the aforementioned works considers all three aspects of learning to rank that are studied in this paper: \emph{online}, \emph{diversity}, and \emph{partial-click feedback}.

%

\section{Conclusions and Future Work}
\label{sec:conclusions}

Diverse recommendations address the problem of ambiguous user intent and reduce redundancy in recommended items \cite{radlinski2009redundancy,adomavicius2012improving}. In this work, we propose the first online algorithm for learning to rank diverse items in the cascade model of user behavior \cite{craswell08experimental}. Our algorithm is computationally efficient and easy to implement. We derive a gap-free upper bound on its scaled $n$-step regret, which is sublinear in the number of steps $n$. We evaluate the algorithm in several synthetic and real-world experiments, and show that it is competitive with a range of baselines.

To show that we learn the optimal preference vector $\theta^\ast$, we assumed $\theta^\ast$ is fixed. In reality, it might change over time maybe due to inherent changes in user preferences or due to serendipitous recommendations themselves provided to the user~\cite{ziegler2005improving}. However, we emphasize that all online learning to rank algorithms \cite{radlinski08learning,slivkins13ranked} including ours can handle such changes in preferences over time. 

One limitation of our work is that we assume that the user clicks on at most one recommended item. We would like to stress that this assumption is only for simplicity of exposition. In particular, the algorithm of Katariya \etalabbr~\cite{katariya16dcm}, which learns to rank from multiple clicks, is almost the same as $\cascadeklucb$ \cite{kveton15cascading}, which learns to rank from at most one click in the cascade model. The only difference is that Katariya \etalabbr~\cite{katariya16dcm} consider feedback up to the last click. We believe that $\cascadelsb$ can be generalized in the same way, by considering all feedback up to the last click. We leave this for future work.
\bibliographystyle{plain}
\bibliography{References}

\begin{thebibliography}{10}

\bibitem{abbasi-yadkori11improved}
Yasin Abbasi-Yadkori, David Pal, and Csaba Szepesvari.
\newblock Improved algorithms for linear stochastic bandits.
\newblock In {\em NIPS}, pages 2312--2320, 2011.

\bibitem{adomavicius2012improving}
Gediminas Adomavicius and YoungOk Kwon.
\newblock Improving aggregate recommendation diversity using ranking-based
  techniques.
\newblock {\em IEEE TKDE}, 24(5):896--911, 2012.

\bibitem{agichtein06improving}
Eugene Agichtein, Eric Brill, and Susan Dumais.
\newblock Improving web search ranking by incorporating user behavior
  information.
\newblock In {\em SIGIR}, pages 19--26, 2006.

\bibitem{auer02finitetime}
Peter Auer, Nicolo Cesa-Bianchi, and Paul Fischer.
\newblock Finite-time analysis of the multiarmed bandit problem.
\newblock {\em Machine Learning}, 47:235--256, 2002.

\bibitem{carbonell1998use}
Jaime Carbonell and Jade Goldstein.
\newblock The use of mmr, diversity-based reranking for reordering documents
  and producing summaries.
\newblock In {\em SIGIR}, pages 335--336. ACM, 1998.

\bibitem{Chen:2016:CMB:2946645.2946695}
Wei Chen, Yajun Wang, Yang Yuan, and Qinshi Wang.
\newblock Combinatorial multi-armed bandit and its extension to
  probabilistically triggered arms.
\newblock {\em JMLR}, 17(1):1746--1778, 2016.

\bibitem{chuklin2015click}
Aleksandr Chuklin, Ilya Markov, and Maarten~de Rijke.
\newblock Click models for web search.
\newblock {\em Synthesis Lectures on Information Concepts, Retrieval, and
  Services}, 7(3):1--115, 2015.

\bibitem{combes15learning}
Richard Combes, Stefan Magureanu, Alexandre Proutiere, and Cyrille Laroche.
\newblock Learning to rank: Regret lower bounds and efficient algorithms.
\newblock In {\em ACM SIGMETRICS}, 2015.

\bibitem{craswell08experimental}
Nick Craswell, Onno Zoeter, Michael Taylor, and Bill Ramsey.
\newblock An experimental comparison of click position-bias models.
\newblock In {\em WSDM}, pages 87--94, 2008.

\bibitem{el2009turning}
Khalid El-Arini, Gaurav Veda, Dafna Shahaf, and Carlos Guestrin.
\newblock Turning down the noise in the blogosphere.
\newblock In {\em SIGKDD}, pages 289--298. ACM, 2009.

\bibitem{grotov2016online}
Artem Grotov and Maarten de~Rijke.
\newblock Online learning to rank for information retrieval: Sigir 2016
  tutorial.
\newblock In {\em Proceedings of the 39th International ACM SIGIR conference on
  Research and Development in Information Retrieval}, pages 1215--1218. ACM,
  2016.

\bibitem{katariya16dcm}
Sumeet Katariya, Branislav Kveton, Csaba Szepesvari, and Zheng Wen.
\newblock {DCM} bandits: Learning to rank with multiple clicks.
\newblock In {\em ICML}, 2016.

\bibitem{kohli13fast}
Pushmeet Kohli, Mahyar Salek, and Greg Stoddard.
\newblock A fast bandit algorithm for recommendations to users with
  heterogeneous tastes.
\newblock In {\em AAAI}, 2013.

\bibitem{kveton15cascading}
Branislav Kveton, Csaba Szepesvari, Zheng Wen, and Azin Ashkan.
\newblock Cascading bandits: Learning to rank in the cascade model.
\newblock In {\em ICML}, 2015.

\bibitem{kveton15combinatorial}
Branislav Kveton, Zheng Wen, Azin Ashkan, and Csaba Szepesvari.
\newblock Combinatorial cascading bandits.
\newblock In {\em NIPS}, pages 1450--1458, 2015.

\bibitem{li2011unbiased}
Lihong Li, Wei Chu, John Langford, and Xuanhui Wang.
\newblock Unbiased offline evaluation of contextual-bandit-based news article
  recommendation algorithms.
\newblock In {\em WSDM}, 2011.

\bibitem{li2016contextual}
Shuai Li, Baoxiang Wang, Shengyu Zhang, and Wei Chen.
\newblock Contextual combinatorial cascading bandits.
\newblock In {\em ICML}, pages 1245--1253, 2016.

\bibitem{liu2009learning}
Tie-Yan Liu et~al.
\newblock Learning to rank for information retrieval.
\newblock {\em Foundations and Trends{\textregistered} in Information
  Retrieval}, 3(3):225--331, 2009.

\bibitem{mei2010divrank}
Qiaozhu Mei, Jian Guo, and Dragomir Radev.
\newblock Divrank: the interplay of prestige and diversity in information
  networks.
\newblock In {\em SIGKDD}, pages 1009--1018. ACM, 2010.

\bibitem{nemhauser78approximation}
G.~L. Nemhauser, L.~A. Wolsey, and M.~L. Fisher.
\newblock An analysis of approximations for maximizing submodular set functions
  - {I}.
\newblock {\em Mathematical Programming}, 14(1):265--294, 1978.

\bibitem{radlinski2009redundancy}
Filip Radlinski, Paul~N Bennett, Ben Carterette, and Thorsten Joachims.
\newblock Redundancy, diversity and interdependent document relevance.
\newblock In {\em SIGIR}, volume~43, pages 46--52. ACM, 2009.

\bibitem{radlinski08learning}
Filip Radlinski, Robert Kleinberg, and Thorsten Joachims.
\newblock Learning diverse rankings with multi-armed bandits.
\newblock In {\em ICML}, pages 784--791, 2008.

\bibitem{raman2012online}
Karthik Raman, Pannaga Shivaswamy, and Thorsten Joachims.
\newblock Online learning to diversify from implicit feedback.
\newblock In {\em Proceedings of the 18th ACM SIGKDD international conference
  on Knowledge discovery and data mining}, pages 705--713. ACM, 2012.

\bibitem{ricci11introduction}
Francesco Ricci, Lior Rokach, and Bracha Shapira.
\newblock Introduction to recommender systems handbook.
\newblock In {\em Recommender Systems Handbook}, pages 1--35. 2011.

\bibitem{slivkins13ranked}
Aleksandrs Slivkins, Filip Radlinski, and Sreenivas Gollapudi.
\newblock Ranked bandits in metric spaces: Learning diverse rankings over large
  document collections.
\newblock {\em JMLR}, 14(1):399--436, 2013.

\bibitem{pmlr-v70-vaswani17a}
Sharan Vaswani, Branislav Kveton, Zheng Wen, Mohammad Ghavamzadeh, Laks V.~S.
  Lakshmanan, and Mark Schmidt.
\newblock Model-independent online learning for influence maximization.
\newblock In {\em ICML}, pages 3530--3539, 2017.

\bibitem{wen2015efficient}
Zheng Wen, Branislav Kveton, and Azin Ashkan.
\newblock Efficient learning in large-scale combinatorial semi-bandits.
\newblock In {\em ICML}, pages 1113--1122, 2015.

\bibitem{NIPS2017_6895}
Zheng Wen, Branislav Kveton, Michal Valko, and Sharan Vaswani.
\newblock Online influence maximization under independent cascade model with
  semi-bandit feedback.
\newblock In {\em NIPS}. 2017.

\bibitem{yu2015learning}
Jun Yu, Dacheng Tao, Meng Wang, and Yong Rui.
\newblock Learning to rank using user clicks and visual features for image
  retrieval.
\newblock {\em IEEE cybernetics}, 45(4):767--779, 2015.

\bibitem{guestrin2011nips}
Yisong Yue and Carlos Guestrin.
\newblock Linear submodular bandits and their application to diversified
  retrieval.
\newblock In {\em NIPS}, pages 2483--2491, 2011.

\bibitem{ziegler2005improving}
Cai-Nicolas Ziegler, Sean~M McNee, Joseph~A Konstan, and Georg Lausen.
\newblock Improving recommendation lists through topic diversification.
\newblock In {\em Proceedings of the 14th international conference on World
  Wide Web}, pages 22--32. ACM, 2005.

\bibitem{zong16cascading}
Shi Zong, Hao Ni, Kenny Sung, Nan~Rosemary Ke, Zheng Wen, and Branislav Kveton.
\newblock Cascading bandits for large-scale recommendation problems.
\newblock In {\em UAI}, 2016.

\end{thebibliography}

\clearpage
\onecolumn
\appendix

\section{Proof for Approximation Ratio}
\label{sec:approximation_ratio}

We prove Theorem~\ref{thm:approximation_ratio} in this section. First, we prove the following technical lemma:
\begin{lemma}
\label{lemma:lower_bounds}
For any positive integer $K=1,2,\ldots$, and any real numbers $b_1, \ldots, b_K \in [0, B]$, where $B$ is a real number in $[0,1]$, we have the following bounds
\[
\max \left \{\frac{1}{K}, 1- \frac{K-1}{2} B  \right \} \sum_{k=1}^K b_k \leq 
1- \prod_{k=1}^K (1- b_k) \leq \sum_{k=1}^K b_k .
\]
\end{lemma}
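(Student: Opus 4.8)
The plan is to establish the three inequalities separately, treating the two-branch lower bound as two independent claims. Throughout I write $S = \sum_{k=1}^K b_k$ and $P = \prod_{k=1}^K(1-b_k)$; since every $b_k \in [0,B] \subseteq [0,1]$, each factor $1-b_k$ lies in $[0,1]$, so $P \in [0,1]$ and $S \geq 0$. Because $S \geq 0$, the claimed lower bound with the $\max$ follows once I show separately that $\frac{1}{K}S \leq 1-P$ and $\left(1-\frac{K-1}{2}B\right)S \leq 1-P$, using $\max\{x,y\}\,S = \max\{xS,yS\}$.

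For the upper bound $1-P \leq S$, I would prove the equivalent Weierstrass product inequality $P \geq 1-S$ by induction on $K$. The base case $K=1$ is an identity; for the step I factor $P = (1-b_K)\prod_{k<K}(1-b_k) \geq (1-b_K)\left(1-S_{K-1}\right)$, where $S_{K-1} = \sum_{k<K} b_k$, using the inductive hypothesis together with $1-b_K \geq 0$, then expand and drop the nonnegative cross term $b_K S_{K-1}$ to obtain $P \geq 1-S$. The first branch of the lower bound is even simpler: since every factor $1-b_j \in [0,1]$, dropping all but one factor gives $P \leq 1-b_j$ for each $j$, hence $1-P \geq b_j$ for all $j$ and therefore $1-P \geq \max_j b_j \geq \frac{1}{K}\sum_j b_j = \frac{1}{K}S$.

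The second branch is the main work. I would first establish the second-order Bonferroni inequality $P \leq 1 - S + Q$, where $Q = \sum_{1\leq i<j\leq K} b_i b_j$, again by induction on $K$: the inductive step multiplies the hypothesis for $K-1$ by $(1-b_K) \geq 0$, expands, and discards the nonpositive correction $-b_K Q_{K-1}$ (with $Q_{K-1}$ the pairwise sum over the first $K-1$ indices). This yields $1-P \geq S - Q$, so it then suffices to bound $Q \leq \frac{K-1}{2}B\,S$. That estimate follows by writing $Q = \tfrac{1}{2}\sum_i b_i \sum_{j\neq i} b_j$ and noting $\sum_{j\neq i} b_j \leq (K-1)B$, since each of the $K-1$ summands is at most $B$. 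Combining the two gives $1-P \geq S - \frac{K-1}{2}B\,S = \left(1-\frac{K-1}{2}B\right)S$, as required.

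The only delicate point is the Bonferroni inequality: the sign of the discarded term $-b_K Q_{K-1}$ must be checked, and, like the upper bound, the whole argument hinges on every $1-b_k$ being nonnegative, which is precisely where the hypothesis $B \leq 1$ enters. Everything else reduces to the inductive expansions above and a union-bound-style estimate on $Q$.
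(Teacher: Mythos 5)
Your proof is correct, and for the crucial second branch of the lower bound it takes a genuinely different route from the paper. The upper bound and the $\frac{1}{K}$ branch match the paper's argument (induction for the former, dropping all but one factor and $\max_j b_j \geq \frac{1}{K}\sum_j b_j$ for the latter). For the branch $1-\prod_k(1-b_k) \geq \bigl(1-\frac{K-1}{2}B\bigr)\sum_k b_k$, the paper runs a symmetrization induction: it writes $1-\prod_{k=1}^{K+1}(1-b_k)$ as the average over $i$ of $(1-b_i)\bigl[1-\prod_{k\neq i}(1-b_k)\bigr]+b_i$, applies the induction hypothesis to each term, bounds $1-b_i \geq 1-B$, and closes with the algebraic identity $\sum_i \sum_{k\neq i} b_k = K\sum_k b_k$; this requires splitting off the trivial case $1-\frac{K}{2}B\leq 0$. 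You instead isolate the standard truncated inclusion--exclusion (Bonferroni) inequality $\prod_k(1-b_k) \leq 1 - S + Q$ with $Q=\sum_{i<j}b_ib_j$, whose induction (discarding the nonpositive term $-b_KQ_{K-1}$, valid since $1-b_K\geq 0$) is routine, and then bound $Q \leq \frac{K-1}{2}BS$ via $\sum_{j\neq i}b_j \leq (K-1)B$. Your decomposition has two advantages: it needs no case distinction on the sign of $1-\frac{K}{2}B$ (the chain $1-P \geq S-Q \geq (1-\frac{K-1}{2}B)S$ holds unconditionally), and it makes transparent that the multiplicative loss is exactly the pairwise-interaction term $Q$, which is where the factor $\frac{K-1}{2}B$ comes from. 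The paper's induction, on the other hand, is self-contained in a single pass and even produces the slightly sharper intermediate constant $1-\frac{K}{2}B+\frac{K(K-1)}{2(K+1)}B^2$ before relaxing it, though it does not exploit this. Both arguments use the hypothesis $B\leq 1$ in the same place: to keep every factor $1-b_k$ nonnegative when multiplying inequalities.
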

\begin{proof}
First, We prove that $1- \prod_{k=1}^K (1- b_k) \leq \sum_{k=1}^K b_k$ by induction. 
Notice that when $K=1$, this inequality trivially holds. Assume that this inequality holds for $K$, we prove that it also holds for $K+1$. Note that
\begin{align}
1- \prod_{k=1}^{K+1} (1- b_k) =& \, \left[1- \prod_{k=1}^{K} (1- b_k) \right] [1- b_{K+1}] + b_{K+1} \nonumber \\
\stackrel{(a)}{\leq}& \, \left[ \sum_{k=1}^K b_k \right] [1- b_{K+1}] + b_{K+1} \nonumber \\
\leq & \, \sum_{k=1}^{K+1} b_k \, ,
\end{align}
where (a) follows from the induction hypothesis. This concludes the proof for the upper bound $1- \prod_{k=1}^K (1- b_k) \leq \sum_{k=1}^K b_k$.

Second, we prove that $1- \prod_{k=1}^K (1- b_k) \geq \frac{1}{K} \sum_{k=1}^K b_k $. Notice that this trivially follows from the fact that
\begin{align}
1- \prod_{k=1}^K (1- b_k) \geq \max_{k} b_k \geq \frac{1}{K} \sum_{k=1}^K b_k \, .
\end{align}

Finally, we prove the lower bound $1- \prod_{k=1}^K (1- b_k) \geq \left[ 1- \frac{K-1}{2} B \right] \sum_{k=1}^K b_k$ by induction.\\

\noindent \textbf{Base Case:} Notice that when $K=1$, we have
\[
1- \prod_{k=1}^K (1- b_k)  = b_1 = \left[ 1- \frac{K-1}{2} B \right]\sum_{k=1}^K b_k  .
\]
That is, the lower bound trivially holds for the case with $K=1$.\\

\noindent \textbf{Induction:} Assume that the lower bound holds for $K$, we prove that it also holds for $K+1$. Notice that if $1-\frac{K}{2}B \leq 0$, then this lower bound holds trivially. For the non-trivial case with $1-\frac{K}{2}B > 0$, we have
\begin{align}
1- \prod_{k=1}^{K+1} (1- b_k) = & \, \frac{1}{K+1} \sum_{i=1}^{K+1}  \left \{ (1- b_i) \left [ 
1 - \prod_{k \neq i} (1- b_k)
\right] + b_i \right \} \nonumber \\
\stackrel{(a)}{\geq } & \,
\frac{1}{K+1} \sum_{i=1}^{K+1}  \left \{ (1- b_i) 
\left[ 1- \frac{K-1}{2} B \right]\sum_{k \neq i} b_k
 + b_i \right \}  \nonumber \\
 \stackrel{(b)}{\geq } & \,
\frac{1}{K+1} \sum_{i=1}^{K+1}  \left \{ (1- B) 
\left[ 1- \frac{K-1}{2} B \right]\sum_{k \neq i} b_k
 + b_i \right \}  \nonumber \\
 \stackrel{(c)}{= } & \,
 \frac{K}{K+1}  \left \{ (1- B) 
\left[ 1- \frac{K-1}{2} B \right]\sum_{k =1}^{K+1} b_k \right \}  
 + \frac{1}{K+1} \sum_{k =1}^{K+1} b_k \nonumber \\
 = & \, \left \{ 
 1- \frac{K}{2}B + \frac{K(K-1)}{2(K+1)} B^2
 \right \} \sum_{k =1}^{K+1} b_k \geq \left \{ 
 1- \frac{K}{2}B 
 \right \} \sum_{k =1}^{K+1} b_k \, ,
\end{align}
where (a) follows from the induction hypothesis, (b) follows from the fact that $b_i \leq B$ for all $i$ and $1- \frac{K-1}{2} B >0$, and (c) follows from the fact that
$\sum_{i=1}^{K+1} \sum_{k \neq i} b_k = K \sum_{k=1}^{K+1} b_k $. This concludes the proof.
\end{proof}

\noindent
We have the following remarks on the results of Lemma~\ref{lemma:lower_bounds}:
\begin{remark}
Notice that the lower bound $1- \prod_{k=1}^K (1- b_k) \geq \frac{1}{K} \sum_{k=1}^K b_k $ is tight when $b_1=b_2=\ldots=b_K=1$. So we cannot further improve this lower bound without imposing additional constraints on $b_k$'s.
\end{remark}
\begin{remark}
From Lemma~\ref{lemma:lower_bounds}, we have
\[
1- \frac{K-1}{2} B \leq \frac{1- \prod_{k=1}^K (1- b_k)}{\sum_{k=1}^K b_k} \leq 1 \, .
\]
Thus, if $B(K-1) \ll 1$, then $1- \prod_{k=1}^K (1- b_k) \approx \sum_{k=1}^K b_k$. Moreover, for any fixed $K$, we have
$\lim_{B \downarrow 0} \frac{1- \prod_{k=1}^K (1- b_k)}{\sum_{k=1}^K b_k} =1$.
\end{remark}

\noindent
We now prove Theorem~\ref{thm:approximation_ratio} based on Lemma~\ref{lemma:lower_bounds}. Notice that by definition of $c_{\max}$, we have $\langle\Delta(a_k \mid \set{a_1, \dots, a_{k - 1}}), \theta^\ast\rangle \leq c_{\max}$. From Lemma~\ref{lemma:lower_bounds}, for any $A \in \Pi_K(E)$, we have
\begin{equation}
\label{eq:approximation_ratio_aux_1}
\max \left \{\frac{1}{K}, 1- \frac{K-1}{2} c_{\max}  \right \} \langle c(A), \theta^\ast\rangle \leq f(A, \theta^\ast) \leq \langle c(A), \theta^\ast\rangle \, .
\end{equation}
Consequently, we have
\begin{align}
f(A^{\text{greedy}}, \theta^{\ast}) \stackrel{(a)}{\geq}  &  \,
 \max \left \{\frac{1}{K}, 1- \frac{K-1}{2} c_{\max}  \right \} \langle c(A^{\text{greedy}}), \theta^\ast \rangle  \nonumber \\
 \stackrel{(b)}{\geq} & \, (1-e^{-1})\max \left \{\frac{1}{K}, 1- \frac{K-1}{2} c_{\max}  \right \}
 \max_{A \in \Pi_K(E)} \langle c(A), \theta^\ast \rangle \nonumber \\
\stackrel{(c)}{\geq} & \, (1-e^{-1})\max \left \{\frac{1}{K}, 1- \frac{K-1}{2} c_{\max}  \right \}
  \langle c(A^\ast), \theta^\ast \rangle \nonumber \\
 \stackrel{(d)}{\geq} & \, (1-e^{-1})\max \left \{\frac{1}{K}, 1- \frac{K-1}{2} c_{\max}  \right \}
 f(A^\ast, \theta^{\ast}) \, , 
\end{align}
where (a) and (d) follow from (\ref{eq:approximation_ratio_aux_1}); (b) follows from the facts that $\langle c(A), \theta^\ast \rangle$ is a monotone and submodular set function in $A$ and $A^{\text{greedy}}$ is computed based on the greedy algorithm; and (c) trivially follows from the fact that $A^\ast \in \Pi_K(E)$. This concludes the proof for Theorem~\ref{thm:approximation_ratio}.

\section{Proof for Regret Bound}
\label{sec:proof}


We start by defining some useful notations. Let $\Pi(E)=\bigcup_{k=1}^L \Pi_k(E)$ be the set of all (ordered) 
lists of set $E$ with cardinality $1$ to $L$,
and $w: \Pi(E) \rightarrow [0,1]$ 
be an arbitrary weight function for lists. For any $A \in \Pi(E)$ and any $w$, we define
\begin{equation}
\label{eq:notation_h}
\textstyle h(A, w) = 1- \prod_{k=1}^{|A|} \left[ 1- w(A^k) \right],
\end{equation}
where $A^k$ is the prefix of $A$ with length $k$. With a little bit abuse of notation, we also define the feature $\Delta(A)$ for list
$A=(a_1, \ldots, a_{|A|})$ as $\Delta(A)=\Delta ( a_{|A|}  | \{a_1, \ldots, a_{|A|-1} \} )$. Then, we define the weight function
$\bar{w}$, its high-probability upper bound $U_t$, and its high-probability lower bound $L_t$ as
\begin{align}
\label{eq:measures}
  \bar{w}(A) & = \Delta(A)^T \theta^\ast\,, \nonumber \\
  U_t(A) & = \mathrm{Proj}_{[0,1]} \left[  \Delta(A) ^T \bar{\theta}_t + \alpha \sqrt{\Delta(A)^T M_t^{-1} \Delta(A)} \right]  \,, \nonumber \\
  L_t(A) & = \mathrm{Proj}_{[0,1]} \left[ \Delta(A) ^T \bar{\theta}_t - \alpha \sqrt{\Delta(A)^T M_t^{-1} \Delta(A)} \right ]
\end{align}
for any ordered list $A$ and any time $t$.
Note that $\mathrm{Proj}_{[0,1]} \left[ \cdot \right]$ projects a real number onto interval $[0,1]$,
and based on Equation~\ref{eq:click probability}, \ref{eq:notation_h}, 
and \ref{eq:measures},
 we have
$h(A, \bar{w})=f(A, \theta^*)$ for all ordered list $A$. We also use $\cH_t$ to denote the history 
of past actions and observations
by the end of time period $t$. Note that $U_{t-1}$, $L_{t-1}$ and $A_t$ are all deterministic 
conditioned on $\cH_{t-1}$.
For all time $t$, we define the ``good event" as 
$\cE_t = \left \{
 L_t(A) \leq \bar{w}(A) \leq U_t(A), \, \forall A \in \Pi(E)
\right \}$,
and $\bar{\cE}_t$ as the complement of $\cE_t$. Notice that both $\cE_{t-1}$ and $\bar{\cE}_{t-1}$ are also deterministic 
conditioned on $\cH_{t-1}$.
Hence, we have
\begin{align}
& \, \E{ f(A^\ast, \theta^\ast) - f(A_t, \theta^\ast)/\gamma}{} = \E{ h(A^\ast, \bar{w}) - h(A_t, \bar{w})/\gamma}{} \nonumber \\
\leq & \, P(\cE_{t-1})  \E{ h(A^*, \bar{w}) - h(A_t, \bar{w})/\gamma \middle | \cE_{t-1}}{} + P(\bar{\cE}_{t-1}), \label{analysis:partial_1}
\end{align}
where the above inequality follows from the naive bound that $h(A^*, \bar{w}) - h(A_t, \bar{w})/\gamma \leq 1$.
Notice that under event $\cE_{t-1}$, we have $h(A, L_{t-1}) \leq h(A, \bar{w}) \leq h(A, U_{t-1})$ for all ordered list
$A$. Thus, we have $h(A^*, \bar{w}) \leq h(A^*, U_{t-1})$.
On the other hand, since $A_t$ is computed based on a $\gamma$-approximation algorithm, by definition
\[
h(A^*, U_{t-1}) \leq \max_{A \in \Pi_{K}(E)} h(A, U_{t-1}) \leq h(A_t, U_{t-1})/\gamma.
\]
Combining the above inequalities, under event $\cE_{t-1}$, we have
\[
h(A^*, \bar{w}) - h(A_t, \bar{w})/\gamma
\leq \frac{1}{\gamma}  \left[ h(A_t, U_{t-1}) - h(A_t, \bar{w}) \right].
\]
Recall that $A_t^k$ is the prefix of $A_t$ with length $k$, then we have
\begin{align}
& \, h(A_t, U_{t-1}) - h(A_t, \bar{w}) =\prod_{k=1}^K(1- \bar{w}(A_t^k)) -  \prod_{k=1}^K(1-U_{t-1}(A_t^k)) \nonumber \\
=& \, \sum_{k=1}^K \left[ \prod_{i=1}^{k-1} (1- \bar{w}(A_t^i)) \right] \left(U_{t-1}(A_t^k) - \bar{w}(A_t^k) \right)
\left[\prod_{j=k+1}^{K} (1-U_{t-1}(A_t^j) )  \right] \nonumber \\
\leq & \sum_{k=1}^K \left[ \prod_{i=1}^{k-1} (1- \bar{w}(A_t^i)) \right] \left(U_{t-1}(A_t^k) - \bar{w}(A_t^k) \right), \nonumber
\end{align}
where the last inequality follows from the fact that $0 \leq U_{t-1}(A_t^j) \leq 1$.
Let $\cG_{tk}$ be the event that item $a^t_k$ is examined at time $t$, then we have
$\E{\mathbf{1} \left[ \cG_{tk} \right] \middle | \cH_{t-1}}{}=\prod_{i=1}^{k-1} (1- \bar{w}(A_t^i)) $.
Moreover, since $\bar{w}(A_t^k) \geq L_{t-1}(A_t^k)$ under event $\cE_{t-1}$ and 
$\cE_{t-1}$ is deterministic conditioned on $\cH_{t-1}$, for any $\cH_{t-1}$ s.t. $\cE_{t-1}$ holds, we have
\begin{align}
& \, \E{h(A_t, U_{t-1}) - h(A_t, \bar{w}) \middle | \cH_{t-1}}{} \nonumber \\
\leq & \, \textstyle \sum_{k=1}^K \E{\mathbf{1} \left[ \cG_{tk} \right] \middle | \cH_{t-1}}{} 
\left[U_{t-1}(A_t^k) - L_{t-1} (A_t^k)  \right] \nonumber \\
\stackrel{(a)}{\leq} & \, 2\alpha \E{ \textstyle \sum_{k=1}^K \mathbf{1} \left[ \cG_{tk} \right] \sqrt{\Delta(A_t^k)^T M_{t-1}^{-1} \Delta(A_t^k)} \middle | \cH_{t-1}}{} \nonumber \\
\stackrel{(b)}{=} & \, 2\alpha \E{ \textstyle \sum_{k=1}^{\min\{C_t, K\}}  \sqrt{\Delta(A_t^k)^T M_{t-1}^{-1} \Delta(A_t^k)} \middle | \cH_{t-1}}{},\nonumber
\end{align}
where (a) follows from the definitions of $U_{t-1}$ and $L_{t-1}$ (see Equation~\ref{eq:measures}), and
(b) follows from the definitions of $C_t$ and $\cG_{tk}$.
Plug the above inequality into Equation~\ref{analysis:partial_1}, we have
\begin{align}
& \, \E{ f(A^*, \theta^\ast) - f(A_t, \theta^\ast)/\gamma}{} \nonumber \\
\leq & \,  P(\cE_{t-1}) \frac{2\alpha}{\gamma} \E{ \textstyle \sum_{k=1}^{\min\{C_t, K\}}  \sqrt{\Delta(A_t^k)^T M_{t-1}^{-1} \Delta(A_t^k)} \middle | \cE_{t-1}}{} + P(\bar{\cE}_{t-1})
\nonumber \\
\leq & \frac{2\alpha}{\gamma} \E{ \textstyle \sum_{k=1}^{\min\{C_t, K\}}  \sqrt{\Delta(A_t^k)^T M_{t-1}^{-1} \Delta(A_t^k)} }{} + P(\bar{\cE}_{t-1}). \nonumber
\end{align}
So we have
\begin{align}
R^{\gamma}(n) \leq \frac{2\alpha}{\gamma} \E{ \sum_{t=1}^n  \sum_{k=1}^{\min\{C_t, K\}}  \sqrt{\Delta(A_t^k)^T M_{t-1}^{-1} \Delta(A_t^k)} }{} + \sum_{t=1}^n P(\bar{\cE}_{t-1}). \nonumber
\end{align}
The regret bound can be obtained based on a worst-case bound on 
$\sum_{t=1}^n  \sum_{k=1}^{\min\{C_t, K\}}  \sqrt{\Delta(A_t^k)^T M_{t-1}^{-1} \Delta(A_t^k)}$, and a bound on 
$P(\bar{\cE}_{t-1})$. The derivations of these two bounds are the same as in Zong \etalabbr~\cite{zong16cascading}.
Specifically, we have
\begin{lemma}
\label{lemma:worst_case_bound}
The following worst-case bound holds
\[ \sum_{t=1}^n  \sum_{k=1}^{\min\{C_t, K\}}  \sqrt{\Delta(A_t^k)^T M_{t-1}^{-1} \Delta(A_t^k)} \leq K \sqrt{\frac{dn \log \left[ 1 + \frac{nK}{d \sigma^2}\right]}{\log \left( 1+ \frac{1}{\sigma^2}\right)}}.
\]
\end{lemma}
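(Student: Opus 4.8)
The plan is to reduce the claimed bound on a sum of confidence \emph{widths} to a bound on a sum of squared widths, and then to control the latter by a determinant (elliptical potential) argument. Write $w_{t,k} = \sqrt{\Delta(A_t^k)^T M_{t-1}^{-1}\Delta(A_t^k)}$, so the left-hand side is $\sum_{t=1}^n\sum_{k=1}^{\min\{C_t,K\}} w_{t,k}$, a sum of at most $nK$ nonnegative terms (the bound is deterministic, holding for any realized sequence of actions and clicks). By Cauchy--Schwarz this is at most $\sqrt{nK\,\sum_{t,k} w_{t,k}^2}$, so it suffices to establish
\[
\sum_{t=1}^n\sum_{k=1}^{\min\{C_t,K\}} \Delta(A_t^k)^T M_{t-1}^{-1}\Delta(A_t^k) \;\le\; \frac{K\,d\,\log\!\left[1+\tfrac{nK}{d\sigma^2}\right]}{\log\!\left(1+\tfrac{1}{\sigma^2}\right)},
\]
since feeding this into the Cauchy--Schwarz step gives exactly $K\sqrt{dn\log[\,\cdot\,]/\log(\,\cdot\,)}$.

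The heart of the argument is a per-round determinant bound. Fix $t$, set $x_k = \Delta(A_t^k)$, and recall that \cref{alg:main} performs a single update $M_t = M_{t-1} + \sigma^{-2}\sum_k x_k x_k^T$ using all examined items. I would factor out $M_{t-1}$ by putting $y_k = M_{t-1}^{-1/2}x_k$, so that $x_k^T M_{t-1}^{-1}x_k = \|y_k\|_2^2$ and $\det M_t/\det M_{t-1} = \det\!\big(I + \sigma^{-2}\sum_k y_k y_k^T\big)$. Letting $\mu_1,\dots,\mu_d\ge 0$ be the eigenvalues of $\sigma^{-2}\sum_k y_k y_k^T$, we have $\sum_i \mu_i = \sigma^{-2}\sum_k\|y_k\|_2^2$ and $\log(\det M_t/\det M_{t-1}) = \sum_i \log(1+\mu_i)$. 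Because $M_{t-1}\succeq M_0 = I$ and the features are normalized ($\|\Delta(A)\|_2\le 1$), each $\|y_k\|_2^2 = x_k^T M_{t-1}^{-1}x_k \le \|x_k\|_2^2 \le 1$, so every eigenvalue satisfies $\mu_i \le \sigma^{-2}\sum_k\|y_k\|_2^2 \le K/\sigma^2$. Applying the scalar inequality $z \le \frac{B}{\log(1+B)}\log(1+z)$ (valid on $[0,B]$ because $z/\log(1+z)$ is increasing) with $B = K/\sigma^2$ to each $\mu_i$ and summing yields
\[
\sum_{k} x_k^T M_{t-1}^{-1}x_k \;=\; \sigma^2\sum_i \mu_i \;\le\; \frac{K}{\log\!\left(1+\tfrac{K}{\sigma^2}\right)}\,\log\frac{\det M_t}{\det M_{t-1}}.
\]

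I would then telescope over $t$: $\sum_t \log(\det M_t/\det M_{t-1}) = \log\det M_n - \log\det M_0 = \log\det M_n$ since $M_0 = I$, and bound $\log\det M_n \le d\log(\operatorname{tr}(M_n)/d)$ via the AM--GM inequality $\det \le (\operatorname{tr}/d)^d$ for positive definite matrices. As $\operatorname{tr}(M_n) = d + \sigma^{-2}\sum_{\text{all examined } x}\|x\|_2^2 \le d + nK/\sigma^2$, this gives $\log\det M_n \le d\log(1+nK/(d\sigma^2))$. Combining the per-round bounds, telescoping, and relaxing the denominator by $\log(1+K/\sigma^2)\ge \log(1+1/\sigma^2)$ produces the displayed sum-of-squares estimate, and the Cauchy--Schwarz step then completes the proof.

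I expect the main obstacle to be precisely the per-round bookkeeping. Unlike the textbook elliptical potential lemma, the algorithm updates the Gram matrix only once per round while computing all $\min\{C_t,K\}$ widths against the \emph{same} $M_{t-1}$, so the usual one-update-per-observation telescoping does not apply directly, and naively replacing $M_{t-1}$ by the running intermediate matrices moves the inequality in the wrong direction. The device above — bounding the eigenvalues of the aggregated rank-$(\le\!K)$ update by $K/\sigma^2$ and invoking monotonicity of $z/\log(1+z)$ — is what converts a single determinant increment into a bound on the sum of $K$ widths and generates the extra factor $K$. The one point requiring care is the feature normalization $\|\Delta(A)\|_2\le 1$ (ensuring each width is at most $1$), which must be verified for the coverage-gain features in our setting.
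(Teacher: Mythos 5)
Your proof is correct. Note that the paper itself contains no proof of this lemma: it defers entirely to Lemma~2 of Zong \etalabbr~\cite{zong16cascading}, and your argument is a faithful, self-contained reconstruction of exactly that elliptical-potential argument --- Cauchy--Schwarz to pass to squared widths, a per-round determinant increment in which the aggregated rank-$(\le\!K)$ update is handled by capping every eigenvalue at $K/\sigma^2$ and invoking monotonicity of $z/\log(1+z)$, telescoping of $\log\det M_t$, the AM--GM trace bound $\log\det M_n \le d\log\bigl(1+\tfrac{nK}{d\sigma^2}\bigr)$, and finally the relaxation $\log(1+K/\sigma^2)\ge\log(1+1/\sigma^2)$. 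You are also right that the naive per-item telescoping fails here (indeed $\det(I+\sum_k y_ky_k^T)\le\prod_k(1+\|y_k\|_2^2)$ by Hadamard, so the product over $k$ bounds the determinant ratio from the wrong side), which is precisely why the extra factor of $K$ must enter through the eigenvalue cap. Your closing caveat is apt rather than a gap on your part: the stated constants do require the feature normalization $\|\Delta(A)\|_2\le 1$, which the paper (like the cited lemma) assumes only implicitly; for the probabilistic coverage features in \eqref{eq:topicCoverage} each entry of $\Delta(A)$ lies in $[0,1]$, so without further assumptions one only has $\|\Delta(A)\|_2\le\sqrt{d}$, and this is an unstated hypothesis of the lemma as written.
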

Please refer to Lemma~2 in Zong \etalabbr~\cite{zong16cascading} for the derivation of Lemma~\ref{lemma:worst_case_bound}.
We also have the following bound on $P(\bar{\cE}_{t})$:

\begin{lemma}
\label{lemma:probability_bound} For any $t$, $\sigma > 0$, $\delta \in (0,1)$, and
\[
\alpha \geq \frac{1}{\sigma} \sqrt{d \log \left( 1 + \frac{nK}{d \sigma^2}\right) + 2 \log \left(\frac{1}{\delta} \right)} + \|\theta^\ast \|_2,
\]
we have $P(\bar{\cE}_{t})\leq \delta$.
\end{lemma}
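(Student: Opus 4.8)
The plan is to collapse the bad event into a single concentration statement about the least-squares estimate and then invoke a self-normalized martingale tail bound. First I would note that, since $\bar{w}(A) = \Delta(A)\transpose \theta^\ast \in [0,1]$ for every list $A$, the projection onto $[0,1]$ in the definitions of $U_t$ and $L_t$ can only help: if the unprojected upper (resp.\ lower) confidence value lies above (resp.\ below) $\bar{w}(A)$, so does its projection. Hence $\cE_t$ is implied by the event that $|\Delta(A)\transpose(\bar{\theta}_t - \theta^\ast)| \leq \alpha \sqrt{\Delta(A)\transpose M_t^{-1} \Delta(A)}$ holds for all $A \in \Pi(E)$ simultaneously. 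By Cauchy--Schwarz in the $M_t^{-1}$ inner product, $|\Delta(A)\transpose(\bar{\theta}_t - \theta^\ast)| \leq \sqrt{\Delta(A)\transpose M_t^{-1}\Delta(A)}\,\|\bar{\theta}_t - \theta^\ast\|_{M_t}$, so it suffices to show $\|\bar{\theta}_t - \theta^\ast\|_{M_t} \leq \alpha$. This gives $\bar{\cE}_t \subseteq \{\|\bar{\theta}_t - \theta^\ast\|_{M_t} > \alpha\}$, eliminating the quantifier over all lists in one stroke.

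Next I would expand the estimate. Flattening all examined (time, position) pairs up to time $t$ into a single index $s$, I write each observation as a feature $X_s = \Delta(A_\tau^k)$ with Bernoulli response $Y_s$ of conditional mean $X_s\transpose \theta^\ast$, and set the noise $\eta_s = Y_s - X_s\transpose \theta^\ast$. From \cref{alg:main}, $M_t = I_d + \sigma^{-2}\sum_s X_s X_s\transpose$ and $\bar{\theta}_t = \sigma^{-2} M_t^{-1} \sum_s X_s Y_s$. Substituting $Y_s = X_s\transpose \theta^\ast + \eta_s$ and using $\sigma^{-2}\sum_s X_s X_s\transpose = M_t - I_d$ yields the standard decomposition $\bar{\theta}_t - \theta^\ast = -M_t^{-1}\theta^\ast + \sigma^{-2} M_t^{-1}\sum_s X_s \eta_s$. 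Applying the triangle inequality in the $M_t$-norm (using $\|M_t^{-1} v\|_{M_t} = \|v\|_{M_t^{-1}}$) gives $\|\bar{\theta}_t - \theta^\ast\|_{M_t} \leq \|\theta^\ast\|_{M_t^{-1}} + \sigma^{-2}\|\sum_s X_s \eta_s\|_{M_t^{-1}}$. The bias term is at most $\|\theta^\ast\|_2$ because $M_t \succeq I_d$ forces $M_t^{-1} \preceq I_d$; this accounts precisely for the additive $\|\theta^\ast\|_2$ in the statement.

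It remains to control the stochastic term $\sigma^{-2}\|\sum_s X_s \eta_s\|_{M_t^{-1}}$. The crux is to verify that $\eta_s$ is a zero-mean, conditionally $1$-sub-Gaussian martingale-difference sequence with respect to the natural filtration --- which holds because $\eta_s$ is a centered Bernoulli bounded in $[-1,1]$ --- so that the self-normalized bound of Abbasi-Yadkori \etalabbr~\cite{abbasi-yadkori11improved} applies with regularizer $\sigma^2 I_d$. Writing $V_t = \sigma^2 M_t$, it yields, with probability at least $1-\delta$, that $\|\sum_s X_s \eta_s\|_{V_t^{-1}}^2 \leq \log\det(M_t) + 2\log(1/\delta)$, and since $\sigma^{-2}\|\cdot\|_{M_t^{-1}} = \sigma^{-1}\|\cdot\|_{V_t^{-1}}$ this is $\sigma^{-1}\sqrt{\log\det(M_t) + 2\log(1/\delta)}$. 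Finally I would bound $\log\det(M_t)$ by the determinant--trace (AM--GM) inequality: since at most $nK$ features of bounded norm are accumulated over $n$ rounds, $\log\det(M_t) \leq d\log(1 + \tfrac{nK}{d\sigma^2})$. Adding the two terms reproduces exactly the threshold on $\alpha$ in the hypothesis, so on this probability-$(1-\delta)$ event $\|\bar{\theta}_t - \theta^\ast\|_{M_t} \leq \alpha$ and hence $P(\bar{\cE}_t) \leq \delta$. The main obstacle is the self-normalized concentration step: setting up the filtration so that the adaptively chosen, feedback-truncated features are predictable and the responses are genuine martingale-difference noise, which is what makes the uniform-over-lists bound valid; the determinant and bias estimates are then routine.
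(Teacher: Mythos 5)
Your proposal is correct and follows essentially the same route as the paper, which proves this lemma by deferring to Lemma~3 of Zong \etalabbr~\cite{zong16cascading}: that argument is exactly your reduction of $\bar{\cE}_t$ to $\|\bar{\theta}_t - \theta^\ast\|_{M_t} \leq \alpha$ via Cauchy--Schwarz, the ridge bias term bounded by $\|\theta^\ast\|_2$, the self-normalized bound of Abbasi-Yadkori \etalabbr~\cite{abbasi-yadkori11improved} applied to the feedback-truncated, per-position filtration, and the determinant--trace bound $\log\det(M_t) \leq d\log\left(1 + \frac{nK}{d\sigma^2}\right)$. The only point worth flagging is that the final determinant bound requires the implicit assumption $\|\Delta(A)\|_2 \leq 1$ on the feature vectors, which the paper also inherits silently from Zong \etalabbr.
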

Please refer to Lemma~3 in Zong \etalabbr~\cite{zong16cascading} for the derivation of Lemma~\ref{lemma:probability_bound}.
Based on the above two lemmas, if we choose
\[
\alpha \geq \frac{1}{\sigma} \sqrt{d \log \left( 1 + \frac{nK}{d \sigma^2}\right) + 2 \log \left( n \right)} + \|\theta^\ast \|_2,
\]
we have $P(\bar{\cE}_{t})\leq 1/n$ for all $t$ and hence
\[
R^{\gamma}(n) \leq \frac{2\alpha K}{\gamma}   \sqrt{\frac{dn \log \left[ 1 + \frac{nK}{d \sigma^2}\right]}{\log \left( 1+ \frac{1}{\sigma^2}\right)}}+1.
\]
This concludes the proof for Theorem~\ref{theorem:main}. 
	
\end{document}